\newcommand{\pentagraphPic}[8]{
  \begin{tikzpicture}[lgraph,#1]
    \node[lnode] (u) {#3};
    \node[lnode] (v) at ([shift={(90:#2)}]u) {#4};
    \node[lnode] (w) at ([shift={(18:#2)}]u) {#5};
    \node[lnode] (x) at ([shift={(306:#2)}]u) {#6};
    \node[lnode] (y) at ([shift={(234:#2)}]u) {#7};
    \node[lnode] (z) at ([shift={(162:#2)}]u) {#8};
    \node[lnode,draw=none,fill=none] (dummy) at ([shift={(270:#2)}]u) {};
    \path (u)     edge[bend left=20] (v)
          (v)     edge[bend left=20] (u)
                  edge (w)
          (w.295) edge[bend left=20] (x)
          (x)     edge (u)
          (x.75)  edge[bend left=20] (w.250)
          (y)     edge (u);
  \end{tikzpicture}
}
\newcommand{\squaregraphPic}[6]{
  \begin{tikzpicture}[lgraph,#1]
    \node[lnode] (u) {#3};
    \node[lnode] (v) at ([shift={(315:#2)}]u) {#4};
    \node[lnode] (w) at ([shift={(225:#2)}]v) {#5};
    \node[lnode] (x) at ([shift={(225:#2)}]u) {#6};
    \path (u) edge (v)
              edge (x)
          (w) edge (v)
              edge (x);
  \end{tikzpicture}
}
\crefname{subsection}{subsection}{subsections}
\renewcommand{\defd}[1]{%
  \ifmmode {\color{darkred} #1}%
  \else {\em\color{darkred} #1}%
  \fi}
\newcommand{\TODO}[1][]{\textsf{\color{RawSienna}[TODO\ifthenelse{\isempty{#1}}{}{: #1}]}}
\newcommand{\UnCo}{{\ast\hspace{-.2ex}}}  
\newcommand{\Wp}{{\operatorname{wp}}}
\newcommand{\ini}{{\operatorname{ini}}}
\newcommand{\old}{{\operatorname{old}}}
\newcommand{\Var}{\mathit{Var}}
\newcommand{\Val}{\mathit{Val}}
\DeclareMathAlphabet{\mathsfit}{\encodingdefault}{\sfdefault}{m}{sl}
\newcommand{\NT}[1]{\mathsfit{#1}}  
\begin{document}

\title{\bf Distributed Graph Automata \\
  and \\
  Verification of Distributed Algorithms}
\author{Fabian Reiter \\
  \href{mailto:fabian.reiter@gmail.com}{\nolinkurl{fabian.reiter@gmail.com}}}
\date{May 2014}
\maketitle

\renewcommand{\abstractname}{\vspace{-\baselineskip}}
\begin{abstract}
  \noindent\textbf{Abstract.}\:\!
  Combining ideas from distributed algorithms and alternating
  automata, we introduce a new class of finite graph automata that
  recognize precisely the languages of finite graphs definable in
  monadic second-order logic. By restricting transitions to be
  nondeterministic or deterministic, we also obtain two strictly
  weaker variants of our automata for which the emptiness problem is
  decidable. As an application, we suggest how suitable graph automata
  might be useful in formal verification of distributed algorithms,
  using Floyd-Hoare logic.
  
  \bigskip\smallskip
  \noindent\textbf{Keywords.}\:\!
  Graphs, Finite automata, MSO-logic, Distributed algorithms, Verification
\end{abstract}

\vspace{5ex}
\tableofcontents

\newpage
\section{Introduction} \label{sec:introduction}

The regularity of a language of finite words is a central notion in
formal language theory. It is often defined as being recognizable by a
finite automaton, but many alternative characterizations exist. By
several well-known results, mostly from the late 1950s and early
1960s, it is equivalent whether a language can be
\begin{enumerate}[label=(\alph*),topsep=1ex,itemsep=0ex]
\item \label{itm:automata} recognized by a (non)deterministic or
  alternating finite automaton \cite{RS59,CKS81},
\item expressed by a regular expression \cite{Kle56},
\item generated by a regular grammar \cite{Cho56},
\item \label{itm:algebraic} obtained as a homomorphic preimage of a
  subset of some finite monoid \cite{Ner58}, or
\item \label{itm:logic} defined in (existential) monadic second-order
  logic \cite{Buc60,Elg61,Tra61}.
\end{enumerate}
All of these characterizations can be generalized from words to trees
in a natural manner, and, quite remarkably, they all remain equivalent
on trees (see, e.g., \cite{TATA08}). Hence, the notion of regularity
extends directly to tree languages.

In contrast, the situation becomes far more complicated if we expand
our field of interest from words or trees to arbitrary finite graphs
(possibly with node labels and multiple edge relations). Although some
of the characterizations mentioned above can be generalized to graphs
in a meaningful way, they are, in general, no longer equivalent.
Perhaps the logical approach~\ref{itm:logic} is the most
straightforward to generalize, since the syntax of monadic
second-order logic (MSO-logic) on graphs remains essentially the same
as on more restricted structures. While on words and trees the
existential fragment of that logic (EMSO-logic) is already sufficient
to characterize regularity, it is strictly less expressive than full
MSO-logic on graphs, as has been shown by Fagin in \cite{Fag75}.
Similarly, the algebraic approach~\ref{itm:algebraic} has been
extended to graphs by Courcelle in \cite{Cou90}, and it turns out that
MSO-logic is strictly less powerful than his notion of
recognizability, which is defined in terms of homomorphisms into
finite algebras. A common pattern that emerges from such results is
that the different characterizations of regularity drift apart as the
complexity of the considered structures increases. In this sense,
regularity cannot be considered a well-defined property of graph
languages.

To complicate matters even further, the automata-theoretic
characterization~\ref{itm:automata}, which is instrumental in the
theory of word and tree languages, does not seem to have a natural
counterpart on graphs. A word or tree automaton can scan its entire
input by a single deterministic traversal, which is completely
determined by the structure of the input (i.e., left-to-right, for
words, or bottom-to-top, for trees). On arbitrary graphs, however,
there is no sense of a global direction that the automaton could
follow, especially since we do not even require connectivity or
acyclicity.

Another approach, investigated by Thomas in \cite{Tho91}, is to
nondeterministically assign a state of the automaton to each node of
the graph, and then check that this assignment satisfies certain local
“transition” conditions for each node (specified with respect to
neighboring nodes within a fixed radius) as well as certain global
occurrence conditions at the level of the entire graph. The graph
acceptors introduced by Thomas, following this principle, turn out to
be equivalent to EMSO-logic on graphs of bounded degree. They are a
legitimate generalization of finite automata, in the sense that they
are equivalent to them and can easily simulate them if we restrict the
input to (graphs representing) words or trees. However, on arbitrary
graphs, they are less well-behaved than classical finite automata,
which is a direct consequence of their equivalence with EMSO-logic. In
particular, they do not satisfy closure under complementation, and
their emptiness problem is undecidable.

\paragraph{Contribution.} In this paper, we attempt to provide an
alternative approach to automata theory on finite graphs. Our model,
dubbed \emph{distributed graph automaton}, takes inspiration from
distributed algorithms and shares some similarities with Thomas' graph
acceptors. More specifically, we also use a combination of local
conditions, which are checked by the nodes using information received
from their neighborhood, and global conditions, which are checked at
the level of the entire graph. However, both types of conditions are
much simpler than in Thomas' model, which allows us to consider graphs
of unbounded degree. Nevertheless, we obtain as a main result that our
automata are equivalent to full MSO-logic if we equip them with the
power of alternation. If, on the other hand, we only allow
nondeterminism, then we get a model that is not closed under
complementation, and is even strictly weaker than EMSO-logic, but has
a decidable emptiness problem. Interestingly, this model is still
powerful enough to characterize precisely the regular languages when
restricted to words or trees. Hence, this work also contributes to the
general observation, made above, that regularity becomes a moving
target when lifted to the setting of graphs. Lastly, by further
disallowing nondeterminism, we obtain an even weaker model of
computation, which we use to illustrate how automata theory on graphs
might have an application in formal verification of distributed
algorithms.

\paragraph{Structure.} The remainder of this article is organized as
follows: Some preliminaries on graphs and logic are reviewed in
\cref{sec:preliminaries}. Then we introduce the notion of distributed
graph automaton and present our results in \cref{sec:DGAs}. That
section is mostly self-contained and constitutes the main part of this
paper. Finally, in \cref{sec:DA_verification}, we sketch an adaptation
of Floyd-Hoare logic to synchronous distributed algorithms. Although
the idea is presented using (the deterministic variant of) our
automaton model, it can be generalized to any type of graph automaton
that satisfies certain properties.

\section{Preliminaries} \label{sec:preliminaries}
We begin by fixing the terminology and notation used in this paper.

\subsection{Graphs and Graph Languages}
Our objects of interest are finite directed graphs with nodes labeled
by an alphabet $Σ$, and multiple edge relations indexed by an alphabet
$Γ$.

\begin{definition}[$Σ$-Labeled $Γ$-Graph]
  Let $Σ$ and $Γ$ be two finite nonempty sets of node labels and edge
  labels, respectively. A \defd{$Γ$-graph} is a structure
  $G=\bigl\langle\VG,\,⟨\arrG{γ}⟩_{\:\!γ∈Γ}\bigr\rangle$, where
  \begin{itemize}
  \item $\VG$ is a finite nonempty set of \defd{nodes}, and
  \item each ${\arrG{γ}}⊆\VG×\VG$ is a set of directed \defd{edges}
    labeled by $γ∈Γ$.
  \end{itemize}
  A (node) \defd{labeling} of $G$ is a function $λ\colon \VG→Σ$. We
  call the tuple $⟨G,λ⟩$ a \defd{$Σ$-labeled} $Γ$-graph and denote it
  by $\defd{G_λ}$.
\end{definition}
If $Σ$ and $Γ$ are understood or irrelevant, we refer to $G$ simply as
a graph and to $G_λ$ as a labeled graph, or even just as a graph. We
do this especially when the alphabets contain only a single “dummy”
symbol, which by default shall be the blank symbol $\blank$. If
$Σ=\{\blank\}$, we also identify $G_λ$ with $G$.

Given a $Γ$-graph $G$, we denote by $Σ^G$ the set of all $Σ$-labeled
versions of $G$, and by $Σ^{\clouded{Γ}}$ (read “$Σ$ clouded $Γ$”) the
set of all $Σ$-labeled $Γ$-graphs, i.e.,
\begin{equation*}
  \defd{Σ^G} ≔ \{G_λ \mid λ\colon\VG→Σ\},\!
  \quad \text{and} \quad
  \defd{Σ^{\clouded{Γ}}} ≔ \,\smashoperator{\bigcup_{G∈\mathcal{G}(Γ)}}\, Σ^G,
\end{equation*}
where $\mathcal{G}(Γ)$ is the set of all $Γ$-graphs. Note that this is
very similar to the standard notation of formal language theory on
words, where $Σ^n$ designates the set of all $Σ$-labeled versions of a
path of length $n$ (i.e., words over $Σ$ of length $n$), and
$Σ^*=\bigcup_{n∈ℕ}Σ^n$.

We are only interested in (labeled) graphs \emph{up to
  isomorphism}. That is, we consider $G_λ,G'_{λ'}∈Σ^{\clouded{Γ}}$ to
be equal if there is a bijection between $\VG$ and $\VGpr$ that
preserves the edge relations and node labels.

A \defd{graph language} is a set of labeled graphs. More precisely,
$L$ is a graph language \Iff there are finite nonempty alphabets $Σ$
and $Γ$, such that $L⊆Σ^{\clouded{Γ}}$.

By a (node) \defd{projection} we mean a mapping $h\colon Σ → Σ'$
between two alphabets $Σ$ and $Σ'$. With slight abuse of notation,
such a mapping is extended to labeled graphs by applying it to each
node label, and to graph languages by applying it to each labeled
graph. That is, for every $G_λ∈Σ^{\clouded{Γ}}$ and
$L⊆Σ^{\clouded{Γ}}$,
\begin{equation*}
  \defd{h(G_λ)} \coloneqq G_{h∘λ},\!
  \quad \text{and} \quad  
  \defd{h(L)} \coloneqq \{h(G_λ)\mid G_λ∈L\},
\end{equation*}
where the operator $∘$ denotes function composition, such that
$(h∘λ)(v)=h(λ(v))$.

When reasoning about graphs as structural objects, we will follow the
usual terminology of graph theory. In particular, given a $Γ$-graph
$G$ and two nodes $u,v∈\VG$, we say that $u$ is an \defd{incoming
  neighbor} of $v$, and $v$ an \defd{outgoing neighbor} of $u$, if
$u\arrG{γ}v$ for some $γ∈Γ$. In this case we also say that $u$ and $v$
are \defd{adjacent}, and without further qualification the term
\defd{neighbor} refers to both incoming and outgoing neighbors. The
\defd{neighborhood} of a node is the set of all of its neighbors. A
node without incoming neighbors is called a \defd{source}, whereas a
node without outgoing neighbors is called a \defd{sink}.

Finally, let us briefly recall some standard graph
properties. Consider a graph $G_λ∈Σ^{\clouded{Γ}}$. We say that $G_λ$
is \defd{undirected} if for every $u,v∈\VG$ and $γ∈Γ$, it holds that
$u\arrG{γ}v$ \Iff $v\arrG{γ}u$. The graph $G_λ$ is (weakly)
\defd{connected} if for every nonempty proper subset $U$ of $\VG$,
there exist two nodes $u∈U$ and $v∈\VG\setminus U$ that are
adjacent. The node labeling $λ$ constitutes a valid \defd{coloring} of
$G$ if no two adjacent nodes share the same label, i.e.,
\noheight{$u\arrG{γ}v$} implies $λ(u)≠λ(v)$, for all $u,v∈\VG$ and
$γ∈Γ$. If $\card{Σ}=k$, such a coloring is called a $k$-coloring of
$G$, and any $Γ$-graph for which a $k$-coloring exists is said to be
\defd{$k$-colorable}. Note that, by definition, a graph that contains
self-loops is not $k$-colorable for any $k$.

\subsection{Logic on Graphs}
We fix two disjoint, countably infinite sets of variables: the supply
of node variables
$\defd{\Vnode}=\{\lsymb{u},\lsymb{v},…,\lsymb{u_1},…\}$, and the
supply of set variables
$\defd{\Vset}=\{\lsymb{U},\lsymb{V},…,\lsymb{U_1},…\}$. Node variables
will always be represented by lower-case letters, and set variables by
upper-case ones, often with subscripts.

\enlargethispage{4ex}
\begin{definition}[Monadic Second-Order Formula] \label{def:mso-formula}
  Let $Σ$ and $Γ$ be two finite nonempty alphabets. The set
  \defd{$\MSO(Σ,Γ)$} of \defd{monadic second-order formulas} (on
  graphs) over $⟨Σ,Γ⟩$ is built up from the atomic formulas
  \begin{itemize}[topsep=1ex,itemsep=0ex]
  \item $\swl{\logic{\lab{\meta{a}}\meta{x}}}{\logic{\meta{x}\xarr{\meta{γ}}\meta{y}}}$
    \quad (“$x$ has label $a$”),
  \item $\logic{\meta{x}\xarr{\meta{γ}}\meta{y}}$
    \quad (“$x$ has a $γ$-edge to $y$”),
  \item $\swl{\logic{\meta{x}=\meta{y}}}{\logic{\meta{x}\xarr{\meta{γ}}\meta{y}}}$
    \quad (“$x$ is equal to $y$”),
  \item $\swl{\logic{\meta{x}∈\meta{X}}}{\logic{\meta{x}\xarr{\meta{γ}}\meta{y}}}$
    \quad (“$x$ is an element of $X$”),
  \end{itemize}
  for all\, $x,y\!∈\!\Vnode$,\, $X\!∈\!\Vset$,\, $a\!∈\!Σ$,\, and
  $γ\!∈\!Γ$, using the usual propositional connectives and
  quantifiers, which can be applied to both node and set
  variables. More precisely, if $φ$ and $ψ$ are $\MSO(Σ,Γ)$-formulas,
  then so are $\logic{¬\meta{φ}}$,\, $\logic{\meta{φ}∨\meta{ψ}}$,\,
  $\logic{\meta{φ}∧\meta{ψ}}$,\, $\logic{\meta{φ}⇒\meta{ψ}}$,\,
  $\logic{\meta{φ}⇔\meta{ψ}}$,\, $\logic{∃\meta{x}(\meta{φ})}$,\,
  $\logic{∀\meta{x}(\meta{φ})}$,\, $\logic{∃\meta{X}(\meta{φ})}$, and
  $\logic{∀\meta{X}(\meta{φ})}$, for all $x∈\Vnode$ and $X∈\Vset$.
\end{definition}

We denote by $\defd{\free(φ)}$ the set of variables in $\Vnode∪\Vset$
that occur freely in $φ$ (i.e., not within the scope of a quantifier),
and use the notation \defd{$φ[x_1,…,x_m,X_1,…,X_n]$} to indicate that
at most the variables given in brackets occur freely in $φ$, i.e.,
$\free(φ)⊆\{x_1,…,x_m,X_1,…,X_n\}$. If\, $\free(φ)=∅$, we also say
that $φ$ is a \defd{sentence}.

The truth of an $\MSO(Σ,Γ)$-formula $φ$ is evaluated with respect to a
labeled graph $G_λ∈Σ^{\clouded{Γ}}$ and a variable assignment\,
$α\colon \free(φ)→\VG ∪ 2^\VG$ that assigns a node $v∈\VG$ to each
node variable in $\free(φ)$, and a set of nodes $S⊆\VG$ to each set
variable in $\free(φ)$. The meaning of atomic formulas is as hinted
informally in \cref{def:mso-formula}. In particular,
\noheight{$\logic{\lab{\meta{a}}\meta{x}}$} is satisfied \Iff
$λ(α(x))=a$, and \noheight{$\logic{\meta{x}\xarr{\meta{γ}}\meta{y}}$}
is satisfied \Iff $α(x)\arrG{γ}α(y)$. For composed formulas,
satisfaction is defined inductively by the standard semantics of
predicate logic. We write $\defd{⟨G_λ,α⟩⊨φ}$ to denote that $G_λ$ and
$α$ \defd{satisfy} $φ$. If $φ$ is a sentence, the variable assignment
is superfluous, and we simply write $\defd{G_λ⊨φ}$ if $G_λ$ satisfies
$φ$.

The graph language $\Lf{Σ,Γ}(φ)$ \defd{defined} by $φ$ with respect to
$Σ$ and $Γ$ is the set of all $Σ$-labeled $Γ$-graphs that satisfy $φ$,
i.e.,
\begin{equation*}
  \defd{\Lf{Σ,Γ}(φ)} \coloneqq \bigl\{ G_λ∈Σ^{\clouded{Γ}} \bigm| G_λ⊨φ \bigr\}.
\end{equation*}
Every graph language that is defined by some MSO-sentence is called
\defd{MSO-definable}. We denote by \defd{$\LL_\MSO$} the class of all
such graph languages.

\begin{example}[3-Colorability] \label{ex:MSO_3_colorable}
  Let $Σ=Γ=\{\blank\}$. The following $\MSO(Σ,Γ)$-sentence defines the
  language of 3-colorable graphs.
  \vspace{-1ex}
  \begin{align*}
    \dphi[color]{3} \coloneqq \logic{∃U_\mpik,U_\mherz,U_\mkreuz\biggl(}
    &\logic{∀u\Bigl(\bigl(u∈U_\mpik ∨ u∈U_\mherz ∨ u∈U_\mkreuz\bigr)
      \,∧\, ¬\bigl(u∈U_\mpik ∧ u∈U_\mherz\bigr) \,∧ \vphantom{\biggl(} } \hspace{-5ex} \\[-1.5ex]
    &\hspace{5ex}\logic{¬\bigl(u∈U_\mpik ∧ u∈U_\mkreuz\bigr)
      \,∧\, ¬\bigl(u∈U_\mherz ∧ u∈U_\mkreuz\bigr)\:\!\Big) \;∧} \\[-.9ex]
    &\logic{∀u,v\Big(u\!\arr\!v
      \;\,⇒\;\, ¬\bigl(u∈U_\mpik∧v∈U_\mpik\bigr) \,∧} \\[-1.5ex]
    &\hspace{11.2ex}\logic{¬\bigl(u∈U_\mherz∧v∈U_\mherz\bigr)
      \,∧\, ¬\bigl(u∈U_\mkreuz∧v∈U_\mkreuz\bigr)\:\!\Bigr)\,\biggr)} \hspace{-5ex}
  \end{align*}
  The existentially quantified set variables $\lsymb{U_\mpik}$,
  $\lsymb{U_\mherz}$ and $\lsymb{U_\mkreuz}$ represent the three
  possible colors. In the first two lines, we specify that the sets
  assigned to these variables form a partition of the set of nodes
  (possibly with empty components). The remaining two lines constitute
  the actual definition of a valid coloring: no two adjacent nodes
  share the same color, which means that adjacent nodes are in
  different sets.
\end{example}

A \defd{first-order formula} (FO-formula) is an MSO-formula in which
set variables may not be bound by quantifiers, i.e., subformulas of
the form \noheight{$\logic{∃\meta{X}(\meta{φ})}$} and
\noheight{$\logic{∀\meta{X}(\meta{φ})}$} are disallowed, for
$X∈\Vset$. An \defd{existential MSO-formula} (EMSO-formula) is of the
form
\noheight{$\logic{∃\meta{X_\one},\meta{…}\,,\meta{X_n}(\meta{φ})}$},
where $X_1,…,X_n∈\Vset$ and $φ$ is an FO-formula. We denote the
classes of FO- and EMSO-definable graph languages by $\defd{\LL_\FO}$
and $\defd{\LL_\EMSO}$. (Note that by \cref{ex:MSO_3_colorable}, the
language of 3-colorable graphs lies in $\LL_\EMSO$.)

\newpage
\section{Distributed Graph Automata} \label{sec:DGAs}
The simple idea of interconnecting finite-state machines in a
synchronous distributed setting presents a natural paradigm for
defining finite automata on graphs of arbitrary topology. In this
section, we introduce three classes of automata obtained this way, and
discuss some of their properties. Our most powerful version of
distributed graph automata turns out to be equivalent to MSO-logic on
graphs. The other two are restricted variants for which the emptiness
problem is decidable.

\subsection{Informal Description} \label{sec:dga-preview}
We start with an informal description of our automaton model. Formal
definitions follow in \cref{sec:dga-definitions}.

A distributed graph automaton (DGA) is an abstract machine that, given
a labeled graph as input, can either accept or reject it, thereby
specifying a graph language. Our model of computation incorporates the
following key concepts:
\paragraph{Synchronous Distributed Algorithm.} A DGA operates
primarily as a distributed algorithm. Each node of the input graph is
assigned its own local processor, which we shall not distinguish from
the node itself. Communication takes place in synchronous rounds, in
which each node receives the current states of its incoming neighbors.
\paragraph{Finite-State Machines.} Each local processor is a
finite-state machine, i.e., an abstract machine that can be in one of
a finite number of states, and has no additional memory. Its initial
state is determined by the node label. After each communication round,
it updates its state according to a (possibly nondeterministic)
transition function that depends only on the current state and the
states received from the incoming neighborhood.
\paragraph{Constant Running Time.} The number of communication rounds
is limited by a constant. To ensure this, we associate a number,
called \emph{level}, with every state. In most cases, this number
indicates the round in which the state may occur. We require that
potentially initial states are at level $0$, and outgoing transitions
from states at level $i$ go to states at level $i+1$. There is an
exception, however: the states at the highest level, called the
\emph{permanent states}, can also be initial states, and can have
incoming transitions from any level. Moreover, all their outgoing
transitions are self-loops. The idea is that, once a node has reached
a permanent state, it terminates its local computation, and waits for
the other nodes in the graph to terminate too.
\paragraph{Aggregation of States.} In order to be finitely
representable, a DGA treats collections of states as sets, i.e., it
abstracts away from the multiplicity of states. This aggregation of
states into sets is applied in two ways:
\begin{itemize}
\item First, the information received by the nodes in each round is a
  family of sets of states, indexed by the edge alphabet of the
  graph. That is, for each edge relation, a node knows which states
  occur in its incoming neighborhood, but it cannot distinguish
  between neighbors that are in the same state.
\item Second, once all the nodes have reached a permanent state, the
  DGA ceases to operate as a distributed algorithm, and collects all
  the reached permanent states into a set $F$. This set is the sole
  acceptance criterion: if $F$ is part of the DGA's accepting sets,
  then the input graph is accepted, otherwise it is rejected.
\end{itemize}

As an introductory example, let us translate the MSO-formula
$\dphi[color]{3}$ from \cref{ex:MSO_3_colorable} to the setting of
DGAs.

\begin{example}[3-Colorability] \label{ex:ADGA_3_colorable}
  \Cref{fig:ADGA_3_colorable} shows the state diagram of a simple
  nondeterministic DGA $\dA[color]{3}$\!. The states are arranged in
  columns corresponding to their levels, ascending from left to right.
  $\dA[color]{3}$ expects a $\{\blank\}$-labeled $\{\blank\}$-graph as
  input, and accepts it \Iff it is 3-colorable. The automaton proceeds
  as follows: All nodes of the input graph are initialized to the
  state $\q{ini}$. In the first round, each node nondeterministically
  chooses to go to one of the states $q_\pik$, $q_\herz$ and
  $q_\kreuz$, which represent the three possible colors. Then, in the
  second round, the nodes verify locally that the chosen coloring is
  valid. If the set received from their incoming neighborhood (only
  one, since there is only a single edge relation) contains their own
  state, they go to $\q{no}$, otherwise to $\q{yes}$. The automaton
  then accepts the input graph \Iff all the nodes are in $\q{yes}$,
  i.e., $\{\q{yes}\}$ is its only accepting set. This is indicated by
  the blue bar to the right of the state diagram. We shall refer to
  such a representation of sets using bars as \emph{barcode}.
\end{example}

\begin{figure}[h!]
  \alignpic
\begin{tikzpicture}[automaton, half row sep]
  \matrix[states] {
        & \node[existential] (q_p) {$q_\pik$}; &[6ex] \\
        &     & \node[permanent] (q_yes) {$\q{yes}$}; \\
    \node[initial,existential] (q_ini) {$\q{ini}$}; & \node[existential] (q_h) {$q_\herz$}; \\
        &     & \node[permanent] (q_no) {$\q{no}$}; \\
        & \node[existential] (q_k) {$q_\kreuz$}; \\
  };
  \path[use as bounding box]
        (q_ini)   edge (q_p)
                  edge (q_h)
                  edge (q_k)
        (q_p)     edge[bend left=25] node[above=-.3ex,xshift=.4ex] {$\xnni q_\pik$} (q_yes)
                  edge[bend left=15] node[above] {$\xni q_\pik$} (q_no)
        (q_h.10)  edge[bend right=5] node[above left=-.5ex,xshift=-1.5ex] {$\xnni q_\herz$} (q_yes)
        (q_h.350) edge[bend left=5] node[below left=-.5ex,xshift=-1.5ex] {$\xni q_\herz$} (q_no)
        (q_k)     edge[bend right=15] node[below=.3ex] {$\xnni q_\kreuz$} (q_yes)
                  edge[bend right=25] node[below,xshift=.4ex] {$\xni q_\kreuz$} (q_no);
  \matrix[accepting sets] {
       \\
    \x \\
       \\
       \\
       \\
  };
  \DrawColumnBackground{2}{4}{1}
\end{tikzpicture}
  \caption{$\dA[color]{3}$\!, a nondeterministic DGA over
    $\bigl\langle\{\blank\},\{\blank\}\bigr\rangle$ whose graph
    language consists of the 3-colorable graphs.}
  \label{fig:ADGA_3_colorable}
\end{figure}

One last key concept that enters into our most general definition of
DGAs is \emph{alternation}, a generalization of nondeterminism
introduced by Chandra, Kozen and Stockmeyer in \cite{CKS81} (in their
case, for Turing machines and other types of word automata).
\paragraph{Alternating Automaton.} In addition to being able to
nondeterministically choose between different transitions, nodes can
also explore several choices in parallel. To this end, the
nonpermanent states of an alternating DGA (ADGA) are partitioned into
two types, \emph{existential} and \emph{universal}, such that states
on the same level are of the same type. If, in a given round, the
nodes are in existential states, then they nondeterministically choose
a single state to go to in the next round, as described above. In
contrast, if they are in universal states, then the run of the ADGA is
split into several parallel branches, called universal branches, one
for each possible combination of choices of the nodes. This procedure
of splitting is repeated recursively for each round in which the nodes
are in universal states. The ADGA then accepts the input graph \Iff
its acceptance condition is satisfied in every universal branch of the
run.

\begin{example}[Non-3-Colorability]
  To illustrate the notion of universal branching, consider the ADGA
  $\dcA[color]{3}$ shown in \cref{fig:ADGA_not_3_colorable}. It is a
  complement automaton of $\dA[color]{3}$ from
  Example~\ref{ex:ADGA_3_colorable}, i.e., it accepts precisely those
  $\{\blank\}$-labeled $\{\blank\}$-graphs that are \emph{not}
  3-colorable. States represented as red triangles are universal
  (whereas the green squares in \cref{fig:ADGA_3_colorable} stand for
  existential states). Given an input graph with $n$ nodes,
  $\dcA[color]{3}$ proceeds as follows: All nodes are initialized to
  $\q{ini}$. In the first round, the run is split into $3^n$ universal
  branches, each of which corresponds to one possible outcome of the
  first round of $\dA[color]{3}$ running on the same input
  graph. Then, in the second round, in each of the $3^n$ universal
  branches, the nodes check whether the coloring chosen in that branch
  is valid. As indicated by the barcode, the acceptance condition of
  $\dcA[color]{3}$ is satisfied \Iff at least one node is in state
  $\q{no}$, i.e., the accepting sets are $\{\q{no}\}$ and
  $\{\q{yes},\q{no}\}$. Hence, the automaton accepts the input graph
  \Iff no valid coloring was found in any universal branch. Note that
  we could also have chosen to make the states $q_\pik$, $q_\herz$ and
  $q_\kreuz$ existential, since their outgoing transitions are
  deterministic. Regardless of their type, there is no branching in
  the second round.
\end{example}

\begin{figure}[h!]
  \alignpic
\begin{tikzpicture}[automaton, half row sep]
  \matrix[states] {
        & \node[universal] (q_p) {$q_\pik$}; &[6ex] \\
        &     & \node[permanent] (q_yes) {$\q{yes}$}; \\
    \node[initial,universal] (q_ini) {$\q{ini}$}; & \node[universal] (q_h) {$q_\herz$}; \\
        &     & \node[permanent] (q_no) {$\q{no}$}; \\
        & \node[universal] (q_k) {$q_\kreuz$}; \\
        & \\
  };
  \path[use as bounding box]
        (q_ini.20) edge (q_p)
        (q_ini)    edge (q_h)
                   edge (q_k)
        (q_p)      edge[bend left=25] node[above=-.3ex,xshift=.1ex] {$\xnni q_\pik$} (q_yes)
                   edge[bend left=15] node[above] {$\xni q_\pik$} (q_no)
        (q_h.10)   edge[bend right=5] node[above left=-.5ex,xshift=-1.5ex] {$\xnni q_\herz$} (q_yes)
        (q_h.350)  edge[bend left=5] node[below left=-.5ex,xshift=-1.3ex] {$\xni q_\herz$} (q_no)
        (q_k)      edge[bend right=15] node[below=.4ex,xshift=-.2ex] {$\xnni q_\kreuz$} (q_yes)
                   edge[bend right=25] node[below,xshift=.5ex] {$\xni q_\kreuz$} (q_no);
  \matrix[accepting sets] {
            \\
       & \x \\
            \\
    \x & \x \\
            \\
            \\
  };
  \DrawColumnBackground{2}{4}{2}
\end{tikzpicture}
\vspace{-2ex}
  \caption{$\dcA[color]{3}$\!, an alternating DGA over
    $\bigl\langle\{\blank\},\{\blank\}\bigr\rangle$ whose graph
    language consists of the graphs that are \emph{not} 3-colorable.}
  \label{fig:ADGA_not_3_colorable}
\end{figure}

\subsection{Formal Definitions} \label{sec:dga-definitions}
We now repeat and clarify the notions from \cref{sec:dga-preview} in a
more formal setting, beginning with our most general definition of
DGAs.

\enlargethispage*{3ex}
\begin{definition}[Alternating Distributed Graph Automaton] \label{def:adga}
  An \defd{alternating distributed graph automaton} (ADGA) over
  alphabets $⟨Σ,Γ⟩$ is a tuple $\A=⟨Σ,Γ,\Q,\ab σ,δ,\F⟩$, where
  \begin{itemize}
  \item $Σ$ and $Γ$ are finite nonempty sets of node labels and edge
    labels, respectively,
  \item $\Q=⟨Q_\EE,Q_\AA,Q_\P⟩$, where $Q_\EE$, $Q_\AA$ and $Q_\P$,
    with $Q_\P≠∅$, are pairwise disjoint finite sets of
    \defd{existential}, \defd{universal} and \defd{permanent} states,
    respectively, which are also referred to by the notational
    shorthands
    \begin{itemize}[topsep=0ex,itemsep=0ex]
    \item $\swl{Q}{Q_\N} \coloneqq Q_\EE∪Q_\AA∪Q_\P,$\, for the entire
      set of \defd{states},
    \item $Q_\N \coloneqq \swl{Q_\EE∪Q_\AA,}{Q_\EE∪Q_\AA∪Q_\P,}$\, for
      the set of \defd{nonpermanent} states,
    \end{itemize}
  \item $σ\colon Σ→Q$ is an \defd{initialization function},
  \item $δ\colon Q×(2^Q)^Γ→2^Q$ is a (local) \defd{transition
      function},\, and
  \item $\F⊆2^{Q_\P}$ is a set of \defd{accepting sets} of permanent
    states.
  \end{itemize}
  The functions $σ$ and $δ$ must be such that one can unambiguously
  associate with every state $q∈Q$ a \defd{level} $\defd{\lA(q)}∈ℕ$
  satisfying the following conditions:
  \begin{itemize}[itemsep=0ex]
    \setlength\abovedisplayskip{1ex}
    \setlength\belowdisplayskip{1ex}
  \item States on the same level are of the same type, i.e., for every
    $i∈ℕ$,
    \begin{equation*}
      \{q∈Q \mid \lA(q)=i\}∈(2^{Q_\EE}∪2^{Q_\AA}∪2^{Q_\P}).
    \end{equation*}
  \item Initial states are either on the lowest level or permanent,
    i.e., for every $q∈Q$,
    \begin{equation*}
      ∃a∈Σ\colon σ(a)=q \quad \text{implies} \quad \lA(q)=0 \,∨\, q∈Q_\P.
    \end{equation*}
  \item Nonpermanent states without incoming transitions are on the
    lowest level, and transitions between nonpermanent states go only
    from one level to the next, i.e., for every $q∈Q_\N$,
    \begin{equation*}
      \lA(q)=
      \begin{cases}
        0 & \text{if for all $p∈Q$ and $\S∈(2^Q)^Γ$\!,
          it holds that $q∉δ(p,\S)$,} \\[.2ex]
        i+1 & \text{\parbox[t]{50ex}{if there are $p∈Q_\N$ and
            $\S∈(2^Q)^Γ$ such that $\lA(p)=i$ and $q∈δ(p,\S)$.}}
      \end{cases}
    \end{equation*}
  \item The permanent states are one level higher than the highest
    nonpermanent ones, and have only self-loops as outgoing
    transitions, i.e., for every $q∈Q_\P$,
    \begin{gather*}
      \lA(q)=
      \begin{cases}
        0                          & \text{if\; $Q_\N=∅$}, \\
        \max\{\lA(q)\mid q∈Q_\N\}+1 & \text{otherwise},
      \end{cases} \\[.2ex]
      δ(q,\S)=\{q\} \quad \text{for every $\S∈(2^Q)^Γ$}\!.
    \end{gather*}
  \end{itemize}
\end{definition}

For any ADGA $\A=⟨Σ,Γ,\ab\Q,\ab σ,\ab δ,\F⟩$, we define its
\defd{length} $\len(\A)$ to be its highest level, i.e.,
$\defd{\len(\A)} \coloneqq \max\{\lA(q)\mid q∈Q\}$.

Next, we want to give a formal definition of a run. For this, we need
the notion of a configuration, which can be seen as the global state
of an ADGA.

\begin{definition}[Configuration]
  ~ \\[-2.5ex]
  Consider an ADGA $\A=⟨Σ,Γ,\ab\Q,\ab σ,\ab δ,\F⟩$. We call any
  $Q$-labeled $Γ$-graph $G_κ∈Q^{\clouded{Γ}}$ a \defd{configuration}
  of $\A$ on $G$. If every node in $G_κ$ is labeled by a permanent
  state, we refer to $G_κ$ as a \defd{permanent}
  configuration. Otherwise, if $G_κ$ is a nonpermanent configuration
  whose nodes are labeled exclusively by existential and (possibly)
  permanent states, we say that $G_κ$ is an \defd{existential}
  configuration. Analogously, $G_κ$ is \defd{universal} if it is
  nonpermanent and only labeled by universal and (possibly) permanent
  states.

  Additionally, we say that a permanent configuration $G_κ$ is
  \defd{accepting} if the set of states occurring in it is accepting,
  i.e., if $\{κ(v) \mid v∈\VG\}∈\F$. Any other permanent configuration
  is called \defd{rejecting}. Nonpermanent configurations are neither
  accepting nor rejecting.
\end{definition}

The (local) transition function of an ADGA specifies for each state a
set of potential successors, for a given family of sets of
states. This can be naturally extended to configurations, which leads
us to the definition of a global transition function.

\begin{definition}[Global Transition Function]
  The \defd{global transition function} $δ^\cloud$ of an ADGA
  $\A=⟨Σ,Γ,\ab\Q,\ab σ,\ab δ,\F⟩$ assigns to each configuration $G_κ$
  of $\A$ the set of all of its \defd{successor configurations} $G_μ$,
  by combining all possible outcomes of local transitions on $G_κ$,
  i.e.,

  \begin{align*}
    δ^\cloud \colon Q^{\clouded{Γ}} &→ 2^{(Q^{\clouded{Γ}})} \\
    G_κ &↦ \biggl\{G_μ \biggm|
    \bigwedge_{v∈\VG} μ(v)∈ δ\Bigl(κ(v),\,\bigl\langle\{κ(u)\mid u\arrG{γ} v\}\bigr\rangle_{γ∈Γ}\Bigr)\biggr\}.
  \end{align*}
\end{definition}

\smallskip
We now have everything at hand to formalize the notion of a run.

\begin{definition}[Run]
  A \defd{run} of an ADGA $\A=⟨Σ,Γ,\Q,σ,δ,\F⟩$ on a labeled graph
  $G_λ∈Σ^{\clouded{Γ}}$ is a directed acyclic graph $R=⟨K,\arr⟩$ whose
  nodes are configurations of $\A$ on $G$, such that
  \begin{itemize}
  \item the \defd{initial configuration} $G_{σ∘λ}∈K$ is the only
    source,\tablefootnote{As before, the operator $∘$ denotes function
      composition, such that $(σ∘λ)(v)=σ(λ(v))$.}
  \item every nonpermanent configuration $G_κ\mathbin{∈}K$ with
    $δ^\cloud(G_κ)=\{G_{μ_1},…,G_{μ_m}\}$ has
    \begin{itemize}[topsep=0ex,itemsep=0ex]
    \item exactly one outgoing neighbor $G_{μ_i}∈δ^\cloud(G_κ)$\, if
      $G_κ$ is existential,
    \item exactly $m$ outgoing neighbors $G_{μ_1},…,G_{μ_m}$\, if
      $G_κ$ is universal,\, and
    \end{itemize}
  \item every permanent configuration $G_κ∈K$ is a sink.
  \end{itemize}
  The run $R$ is \defd{accepting} if every permanent configuration
  $G_κ∈K$ is accepting.
\end{definition}

An ADGA $\A=⟨Σ,Γ,\Q,σ,δ,\F⟩$ \defd{accepts} a labeled graph
$G_λ∈Σ^{\clouded{Γ}}$ \Iff there exists an accepting run $R$ of $\A$
on $G_λ$. The graph language \defd{recognized} by $\A$ is the set
\begin{equation*}
  \defd{\L(\A)} \coloneqq \bigl\{ G_λ∈Σ^{\clouded{Γ}} \bigm| \text{$\A$ accepts $G_λ$} \bigr\}.
\end{equation*}
Every graph language that is recognized by some ADGA is called
\defd{ADGA-recognizable}. We denote by \defd{$\LL_\ADGA$} the class of
all such graph languages.

The ADGA $\A$ is \defd{equivalent} to some $\MSO(Σ,Γ)$-sentence $φ$ if
it recognizes precisely the graph language defined by $φ$, i.e., if
$\L(\A)=\Lf{Σ,Γ}(φ)$.

We inductively define that a configuration $G_κ∈Q^{\clouded{Γ}}$ is
\defd{reachable} by $\A$ on $G_λ$ if either $G_κ=G_{σ∘λ}$, or
$G_κ∈δ^\cloud(G_μ)$ for some configuration $G_μ∈Q^{\clouded{Γ}}$
reachable by $\A$ on $G_λ$. In case $G_λ$ is irrelevant, we simply say
that $G_κ$ is reachable by $\A$.

The automaton $\A$ is called a \defd{nondeterministic DGA} (NDGA) if
it has no universal states, i.e., if $Q_\AA=∅$. If additionally every
configuration $G_κ∈Q^{\clouded{Γ}}$ that is reachable by $\A$ has
precisely one successor configuration, i.e., $\card{δ^\cloud(G_κ)}=1$,
then we refer to $\A$ as a \defd{deterministic DGA} (DDGA). We denote
the classes of NDGA- and DDGA-recognizable graph languages by
$\defd{\LL_\NDGA}$ and $\defd{\LL_\DDGA}$.

\begin{figure}[p]
  \alignpic
\begin{tikzpicture}[automaton, half row sep]
  \matrix[states] {
        &[4ex]&[7ex] \node[permanent] (q_ap) {$\qap$}; \\
    \node[existential] (q_a) {$\qa$}; & \node[universal] (q_a2) {$\qaprime$};  \\
        &     & \node[permanent] (q_ah) {$\qah$}; \\
        & \node[universal] (q_bkr) {$\qbkr$}; \\
    \node[existential] (q_b) {$\qb$}; &     & \node[permanent] (q_yes) {$\q{yes}$}; \\
        & \node[universal] (q_bk) {$\qbk$}; \\
        \\
    \node[existential] (q_c) {$\qc$}; &     & \node[permanent] (q_no) {$\q{no}$}; \\
  };
  \matrix[symbols] { 
        \\
    \node (a) {$\a$}; \\
        \\
        \\
    \node (b) {$\b$}; \\
        \\
        \\
    \node (c) {$\c$}; \\
  };
  \matrix[accepting sets] {
    \x &    \\
       &    \\
       & \x \\
       &    \\
    \x & \x \\
       &    \\
       &    \\
       &    \\
  };
  \DrawColumnBackground{1}{8}{2}
  \path (a)        edge (q_a)
        (b)        edge (q_b)
        (c)        edge (q_c)
        (q_a)      edge (q_a2)
        (q_b.25)   edge node[above] {$\xnni \qb$} (q_bkr)
        (q_b.0)    edge node[above,xshift=.2ex] {$\xnni \qb$} (q_bk)
        (q_b.330)  edge[bend right=20] node[above=4.2ex,xshift=-9.2ex] {$\xni \qb$} (q_no) 
        (q_c.0)    edge[bend right=25] node[above=-.5ex,xshift=-12ex] {$\xnni \qc ∧ \xnni \qa$} (q_yes)
        (q_c.330)  edge[bend right=22] node[above=.2ex] {$\xni \qc ∨ \xni \qa$} (q_no)
        (q_a2.22)  edge[bend left=8] node[above=.3ex,xshift=.1ex] {$\xeq \{\qbkr,\qbk\}$} (q_ap)
        (q_a2)     edge node[above=.1ex,xshift=.4ex] {$\xeq \{\qbkr,\qbk\}$} (q_ah)
        (q_a2.320) edge[bend right=9] node[above=4.5ex,xshift=1.3ex] {$\xneq \{\qbkr,\qbk\}$} (q_no)
        (q_bkr)    edge (q_yes)
        (q_bk)     edge (q_yes);
\end{tikzpicture}
  \caption{$\sA{centric}$, an ADGA over
    $\bigl\langle\{\a,\b,\c\},\{\blank\}\bigr\rangle$ whose graph
    language consists of the labeled graphs that satisfy the following
    conditions: the labeling constitutes a valid 3-coloring, there is
    precisely one $\a$-labeled node $v_\a$, the undirected
    neighborhood of $v_\a$ contains only $\b$-labeled nodes, and
    $v_\a$ has at least two incoming neighbors.}
  \label{fig:ADGA_concentric_circles}
\end{figure}

\begin{figure}[p]
  \alignpic
%
\pentagraphPic{input graph}{\lnodedistIG}{$\a$}{$\b$}{$\c$}{$\b$}{$\b$}{$\c$}
  \caption{An $\{\a,\b,\c\}$-labeled $\{\blank\}$-graph.}
  \label{fig:graph_labeled_pentagon}
\end{figure}

\begin{figure}[p]
  \alignpic
\begin{tikzpicture}[run or game]
  \matrix {
      & & \node[config,pacc] (c3a)
           {\pentagraphPic{configuration}{\lnodedistC}{$\qap$}{$\q{yes}$}{$\q{yes}$}{$\q{yes}$}{$\q{yes}$}{$\q{yes}$}}; \\
    \node[config,exis] (c1)
     {\pentagraphPic{configuration}{\lnodedistC}{$\qa$}{$\qb$}{$\qc$}{$\qb$}{$\qb$}{$\qc$}}; 
      & \node[config,univ] (c2)
         {\pentagraphPic{configuration}{\lnodedistC}{$\qaprime$}{$\qbkr$}{$\q{yes}$}{$\qbkr$}{$\qbk$}{$\q{yes}$}}; \\
      & & \node[config,pacc] (c3b)
           {\pentagraphPic{configuration}{\lnodedistC}{$\qah$}{$\q{yes}$}{$\q{yes}$}{$\q{yes}$}{$\q{yes}$}{$\q{yes}$}}; \\
  };
  \path (c1) edge (c2)
        (c2) edge (c3a)
             edge (c3b);
\end{tikzpicture}
  \caption{An accepting run of the ADGA of
    \cref{fig:ADGA_concentric_circles} on the labeled graph of
    \cref{fig:graph_labeled_pentagon}.}
  \label{fig:run_accepting}
\end{figure}

\bigskip
Let us now illustrate the notion of ADGA by means of a slightly more
involved example.

\begin{example}[Concentric Circles]
  Consider the ADGA $\sA{centric}=⟨Σ,Γ,\Q,σ,δ,\F⟩$ represented by the
  state diagram in \cref{fig:ADGA_concentric_circles}. The node and
  edge alphabets are $Σ=\{\a,\b,\c\}$ and $Γ=\{\blank\}$. Again,
  existential states are represented by green squares, universal
  states by red triangles, and permanent states by blue circles. The
  short arrows mapping node labels to states indicate the
  initialization function $σ$. For instance, $σ(\a)=\qa$. The other
  arrows specify the transition function $δ$. A label on such a
  transition arrow indicates a requirement on the set of states that a
  node receives from its incoming neighborhood (only one set, since
  there is only a single edge relation). For instance,
  $δ\bigl(\qb,⟨\{\qa,\qc\}⟩\bigl)=\{\qbkr,\qbk\}$. If there is no
  label, any set is permitted. Finally, as indicated by the barcode on
  the far right, the set of accepting sets is
  $\F=\bigl\{\{\qap,\q{yes}\},\{\qah,\q{yes}\}\bigr\}$.

  Intuitively, $\sA{centric}$ proceeds as follows: In the first round,
  the $\a$-labeled nodes do nothing but update their state, while the
  $\b$- and $\c$-labeled nodes verify that the labels in their
  incoming neighborhood satisfy the condition of a valid graph
  coloring. The $\c$-labeled nodes additionally check that they do not
  see any $\a$'s, and then directly terminate. Meanwhile, the
  $\b$-labeled nodes nondeterministically choose one of the markers
  $\kreuz$ and $\karo$. In the second round, only the $\a$-labeled
  nodes are busy. They verify that their incoming neighborhood
  consists exclusively of $\b$-labeled nodes, and that both of the
  markers $\kreuz$ and $\karo$ are present, thus ensuring that they
  have at least two incoming neighbors. Then, they simultaneously pick
  the markers $\pik$ and $\herz$, thereby creating different universal
  branches, and the run of the automaton terminates. Finally, the ADGA
  checks that all the nodes approve of the graph (meaning that none of
  them has reached the state $\q{no}$), and that in each universal
  branch, precisely one of the markers $\pik$ and $\herz$ occurs,
  which implies that there is a unique $\a$-labeled node.

  To sum up, the graph language $\L(\sA{centric})$ consists of all the
  $\{\a,\b,\c\}$-labeled $\{\blank\}$-graphs such that
  \begin{itemize}
    \item the labeling constitutes a valid 3-coloring,
    \item there is precisely one $\a$-labeled node $v_\a$, and
    \item $v_\a$ has only $\b$-labeled nodes in its undirected
      neighborhood, and at least two incoming neighbors.
  \end{itemize}
  The name “$\sA{centric}$” refers to the fact that, in the (weakly)
  connected component of $v_\a$, the $\b$- and $\c$-labeled nodes form
  con\emph{centric} circles around $v_\a$, i.e., nodes at distance~1
  of $v_\a$ are labeled with $\b$, nodes at distance~2 (if existent)
  with $\c$, nodes at distance~3 (if existent) with $\b$, and so
  forth.

  \Cref{fig:graph_labeled_pentagon} shows an example of a labeled
  graph that lies in $\L(\sA{centric})$. A corresponding accepting run
  can be seen in \cref{fig:run_accepting}. We have adopted the same
  coloring scheme as for (automaton) states, i.e., a green
  configuration is existential, a red one is universal, and a blue one
  is permanent. In the first round, the three nodes that are in state
  $\qb$ have a nondeterministic choice between $\qbkr$ and
  $\qbk$. Hence, the second configuration is one of eight possible
  choices. The branching in the second round is due to the node in
  state $\qaprime$ which goes simultaneously to $\qap$ and $\qah$. In
  both branches, an accepting configuration is reached, since
  $\{\qap,\q{yes}\}$ and $\{\qah,\q{yes}\}$ are both accepting
  sets. Therefore, the entire run is accepting.
\end{example}

\bigskip 

In the following subsections (\ref{sec:hierarchy-closure},
\ref{sec:adga=mso} and \ref{sec:ndga-emptiness}), we derive our
results on the properties of DGAs. For more detailed proofs and
further examples of automata (recognizing, e.g., connected or cyclic
graphs), see \cite{Rei14}.

\subsection{Hierarchy and Closure Properties} \label{sec:hierarchy-closure}
\begin{lemma}[Closure Properties of $\LL_\ADGA$] \label{lem:adga-closure}
  The class $\LL_\ADGA$ of ADGA-recognizable graph languages is
  effectively closed under Boolean set operations and under
  projection.
\end{lemma}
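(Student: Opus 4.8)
The plan is to treat the four operations in turn, with alternation carrying complementation and a sequential ``product'' construction carrying the binary operations. For \emph{complementation}, given an ADGA $\A=\langle\Sigma,\Gamma,\langle Q_\EE,Q_\AA,Q_\P\rangle,\sigma,\delta,\F\rangle$, I pass to its \emph{dual} $\overline{\A}:=\langle\Sigma,\Gamma,\langle Q_\AA,Q_\EE,Q_\P\rangle,\sigma,\delta,\,2^{Q_\P}\setminus\F\rangle$, obtained by swapping the existential and universal states and complementing the family of accepting sets. Since $\sigma$, $\delta$ and the level assignment $\lA$ are untouched, and the conditions of \cref{def:adga} only require each level to be type-homogeneous and say nothing further about the existential/universal distinction, $\overline{\A}$ is again an ADGA, with the same configurations and the same global transition function as $\A$. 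Reading the semantics on a fixed input $G_\lambda$ as a two-player game on the (finite) set of reachable configurations --- Eve to move at existential configurations, Adam at universal ones, a play ending at a permanent configuration, and Eve winning that play iff the configuration is accepting --- an accepting run of $\A$ is precisely a winning strategy for Eve, so $\A$ accepts $G_\lambda$ iff Eve wins; and the game of $\overline{\A}$ is this very game with the two players interchanged and Eve's winning condition negated, so $\overline{\A}$ accepts $G_\lambda$ iff Adam wins the game of $\A$. As this game is finite and has no draws (every play reaches a permanent configuration within $\len(\A)$ rounds, and a player with no legal move loses), it is determined; hence exactly one of the players wins, giving $\L(\overline{\A})=\Sigma^{\clouded{\Gamma}}\setminus\L(\A)$.

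For \emph{intersection} of ADGAs $\A_1,\A_2$ over $\langle\Sigma,\Gamma\rangle$, I build an automaton $\A_{1,2}$ that simulates $\A_1$ during its first $\len(\A_1)$ rounds and then $\A_2$, carrying throughout a pair of ``component'' states. In the first phase a node's state is a state of $\A_1$ together with a padding component --- a round counter plus the node's own label $a$ --- whose sole purpose is to keep the node non-permanent, so that the level/type structure stays well defined even after the $\A_1$-component has become permanent, and to remember $a$; at the phase boundary the padding is reset to $\sigma_2(a)$, and in the second phase the state is an already permanent state of $\A_1$ (now frozen) together with an evolving state of $\A_2$. Level $k$ of $\A_{1,2}$ is given the type of level $k$ of $\A_1$ for $k<\len(\A_1)$, and the type of level $k-\len(\A_1)$ of $\A_2$ afterwards; one checks that every condition of \cref{def:adga} still holds, the only delicate points being $\A_1$-components that become permanent before round $\len(\A_1)$ and labels $a$ with $\sigma_2(a)$ already permanent, both of which sit harmlessly inside the product. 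A run of $\A_{1,2}$ then decomposes as a run of $\A_1$ (with the padding tagging along) with, grafted below \emph{each} of its leaves and all starting from the \emph{same} configuration $\sigma_2\circ\lambda$, a run of $\A_2$. Declaring a set $T$ of permanent pairs accepting exactly when its first projection lies in $\F_1$ and its second in $\F_2$, the accepting runs of $\A_{1,2}$ are precisely the pairings of an accepting run of $\A_1$ with an accepting run of $\A_2$, so $\L(\A_{1,2})=\L(\A_1)\cap\L(\A_2)$.

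\emph{Union} now follows from complementation and intersection by De Morgan's law, $\L(\A_1)\cup\L(\A_2)=\Sigma^{\clouded{\Gamma}}\setminus\bigl((\Sigma^{\clouded{\Gamma}}\setminus\L(\A_1))\cap(\Sigma^{\clouded{\Gamma}}\setminus\L(\A_2))\bigr)$ (alternatively, the same product works directly once $T$ is declared accepting when its first projection lies in $\F_1$ \emph{or} its second in $\F_2$). For a node \emph{projection} $h\colon\Sigma\to\Sigma'$ and an ADGA $\A$ over $\langle\Sigma,\Gamma\rangle$, I prepend a single existential round: for each $a'\in\Sigma'$ add a fresh level-$0$ existential state $q_{a'}$, set $\sigma'(a'):=q_{a'}$ and $\delta'(q_{a'},\cdot):=\{\sigma(a)\mid a\in h^{-1}(a')\}$, raise every level of $\A$ by one, and leave everything else unchanged. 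On input $G_{\lambda'}$ each node independently and existentially guesses a label in $h^{-1}(\lambda'(v))$ --- that is, guesses a labelling $\lambda$ with $h\circ\lambda=\lambda'$ --- after which the run is simply a run of $\A$ on $G_\lambda$; so the new automaton accepts $G_{\lambda'}$ iff $G_\lambda\in\L(\A)$ for some such $\lambda$, i.e.\ $G_{\lambda'}\in h(\L(\A))$ (a node whose label lies outside the image of $h$ has no successor and correctly forces rejection).

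I expect the main obstacle to be the product underlying intersection. Complementation is the conceptually central point, but once phrased through determinacy of finite games it is short. By contrast, $\A_1$ and $\A_2$ may have different lengths \emph{and} incompatible alternations of existential and universal levels --- which is exactly what rules out a synchronous product and forces the sequential simulation with its padding component --- so the real work lies in checking that $\A_{1,2}$ still satisfies every clause of \cref{def:adga}, particularly around states that become permanent ahead of time.
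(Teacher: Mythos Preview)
Your proposal is correct. Complementation and projection match the paper exactly: dualise by swapping existential and universal states and complementing $\F$, justified via determinacy of the finite two-player game on reachable configurations, and handle projection by prepending one existential round in which each node guesses a preimage of its label.

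Where you diverge is in the binary Boolean operations. The paper builds \emph{union} directly: every node nondeterministically picks, in an added first round, which of $\A_1,\A_2$ it will simulate; disagreement is caught locally by adjacent nodes and globally by forbidding accepting sets that mix permanent states of both automata. Intersection then falls out of complementation and union via De~Morgan. You go the other way, constructing \emph{intersection} directly by a \emph{sequential} product that runs $\A_1$ to completion and then $\A_2$, with a padding component (round counter plus stored label) to keep premature permanent $\A_1$-states non-permanent and to initialise $\sigma_2$ at the phase boundary; union then comes from De~Morgan or, as you note, from the same product with an ``or'' acceptance condition. Your diagnosis that a synchronous product is blocked by possibly incompatible existential/universal alternations of $\A_1$ and $\A_2$ is exactly the point, and the sequential product cleanly circumvents it at the cost of the level bookkeeping you identify as the main obstacle. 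One small practical advantage of the paper's route is reuse: its union construction carries over verbatim to NDGAs in \cref{lem:ndga-closure}, where intersection can then be done by the simpler \emph{synchronous} product (no type clash without universal states). Conversely, your single sequential construction yields both intersection and union without appealing to complementation, which is a tidier dependency graph.
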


\begin{proofsketch}
  As usual for alternating automata, complementation can be achieved
  by simply swapping the existential and universal states, and
  complementing the acceptance condition. That is, for an ADGA
  $\A=\bigl\langle Σ,Γ,⟨Q_\EE,Q_\AA,Q_\P⟩,σ,δ,\F \bigr\rangle$, a
  complement automaton is $\cA=\bigl\langle
  Σ,Γ,⟨Q_\AA,Q_\EE,Q_\P⟩,σ,δ,2^{Q_\P}\setminus\F \bigr\rangle$. This
  can be easily seen by associating a two-player game with $\A$ and
  any $Σ$-labeled $Γ$-graph $G_λ$. One player tries to come up with an
  accepting run of $\A$ on $G_λ$, whereas the other player seeks to
  find a (path to a) rejecting configuration in any run proposed by
  the adversary. The first player has a winning strategy \Iff $\A$
  accepts $G_λ$. (This game-theoretic characterization will be used
  and explained more extensively in the proof of \cref{thm:adga=mso}.)
  From this perspective, the construction of $\cA$ corresponds to
  interchanging the roles and winning conditions of the two players.

  For two ADGAs $\A_1$ and $\A_2$, we can effectively construct an
  ADGA $\A_∪$ that recognizes $\L(\A_1)∪\L(\A_2)$ by taking advantage
  of nondeterminism. The approach is, in principle, very similar to
  the corresponding construction for nondeterministic finite automata
  on words. In the first round of $\A_∪$, each node in the input graph
  nondeterministically and independently decides whether to behave
  like in $\A_1$ or in $\A_2$. If there is a consensus, then the run
  continues as it would in the unanimously chosen automaton $\A_j$,
  and it is accepting \Iff it corresponds to an accepting run of
  $\A_j$. Otherwise, a conflict is detected, either locally by
  adjacent nodes that have chosen different automata, or at the
  latest, when acceptance is checked globally (important for
  disconnected graphs), and in either case the run is rejecting. (Note
  that we have omitted some technicalities that ensure that the
  construction outlined above satisfies all the properties of an
  ADGA.)

  Closure under node projection is straightforward, again by
  exploiting nondeterminism. Given an ADGA $\A$ with node alphabet $Σ$
  and a projection $h\colon Σ → Σ'$, we can effectively construct an
  ADGA $\A'$ that recognizes $h(\L(\A))$ as follows: For every $b∈Σ'$,
  each node labeled with $b$ nondeterministically chooses a new label
  $a∈Σ$, such that $h(a)=b$. Then, the automaton $\A$ is simulated on
  that new input.
\end{proofsketch}

\begin{lemma}[$\LL_\NDGA⊂\LL_\ADGA$] \label{lem:ndga<adga}
  There are (infinitely many) ADGA-recognizable graph languages that
  are not NDGA-recognizable.
\end{lemma}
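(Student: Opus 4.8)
The plan is to exhibit a concrete family of graph languages that separates the two classes. The natural candidate is the language of non-$3$-colorable graphs (or, more simply, of non-$2$-colorable, i.e., non-bipartite, graphs), since its complement — the $3$-colorable (resp. bipartite) graphs — is easily seen to be NDGA-recognizable by the automaton $\dA[color]{3}$ from \cref{ex:ADGA_3_colorable}, and it is known that $3$-colorability is not expressible in MSO-logic while the whole class $\LL_{\NDGA}$ will turn out (by the results announced in \cref{sec:ndga-emptiness}, or by a direct argument) to sit inside $\LL_{\MSO}$. Concretely, I would first argue that $\LL_{\NDGA}\subseteq\LL_{\MSO}$: a run of an NDGA is just a node-labeling of the input graph by states (one label per node per round, which can be encoded by finitely many set variables since the number of rounds is the constant $\len(\A)$), subject to local constraints on neighborhoods — all first-order expressible — plus the global acceptance test ``the set of permanent states that occur is one of the finitely many accepting sets'', which is again first-order over the guessed set variables. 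Hence every NDGA-recognizable language is in fact EMSO-definable, so certainly MSO-definable.

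Next I would invoke the fact (Fagin~\cite{Fag75}, or the classical non-definability of $k$-colorability in MSO on general graphs) that the language of non-$3$-colorable graphs is \emph{not} MSO-definable — equivalently, $3$-colorability is not MSO-definable, and MSO is closed under negation on a fixed class of structures. Combining this with the previous paragraph gives that the non-$3$-colorable graphs form a language that is not NDGA-recognizable. On the other hand, this language \emph{is} ADGA-recognizable: it is the language recognized by $\dcA[color]{3}$ of \cref{fig:ADGA_not_3_colorable}, or alternatively one may simply apply \cref{lem:adga-closure} (closure under complementation) to an ADGA for the $3$-colorable graphs. This yields one language in $\LL_{\ADGA}\setminus\LL_{\NDGA}$. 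To get infinitely many such languages, I would use a padding-style argument: for each $k\geq 3$, the language of non-$k$-colorable graphs is ADGA-recognizable (generalize $\dcA[color]{k}$ in the obvious way, or complement an NDGA for $k$-colorability) but not MSO-definable, hence not NDGA-recognizable; and these languages are pairwise distinct (e.g.\ $K_{k+1}$ is non-$k$-colorable but $(k{+}1)$-colorable), giving an infinite supply. Alternatively, one can take disjoint unions / direct products of the non-$3$-colorability language with trivial ``marker'' languages over growing alphabets to manufacture infinitely many distinct examples while preserving non-NDGA-recognizability and ADGA-recognizability.

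The main obstacle is the inclusion $\LL_{\NDGA}\subseteq\LL_{\MSO}$ (or directly $\subseteq\LL_{\EMSO}$): one must check carefully that a run of an NDGA can be faithfully encoded by a \emph{bounded} number of monadic set variables and that the transition and acceptance conditions become an MSO (indeed FO, after the existential prefix) sentence. The key points making this work are exactly the structural constraints in \cref{def:adga}: the constant length $\len(\A)$ bounds the number of rounds, so only finitely many ``state-at-round-$i$'' predicates are needed; each local transition step refers only to the multiset-free, set-valued view of a node's incoming neighborhood, which is FO-expressible; permanent states absorb (all their outgoing transitions are self-loops), so nodes that have terminated stay put and the final ``set of occurring permanent states $\in\F$'' test is a finite disjunction of FO sentences. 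Everything else — the non-MSO-definability of $k$-colorability and the ADGA-recognizability of its complement — is either cited or immediate from \cref{lem:adga-closure} together with \cref{ex:ADGA_3_colorable}.
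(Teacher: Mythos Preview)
Your proposal has a fatal error: the central claim that $3$-colorability (and hence non-$3$-colorability) is not MSO-definable is simply false. The paper itself exhibits an explicit $\MSO$-sentence $\dphi[color]{3}$ defining the $3$-colorable graphs in \cref{ex:MSO_3_colorable}; since MSO is closed under negation, non-$3$-colorability is MSO-definable as well. (What is true is that $k$-colorability is not \emph{FO}-definable, and Fagin's result in~\cite{Fag75} concerns \emph{connectivity} separating $\LL_\EMSO$ from $\LL_\MSO$ --- neither of which helps here.) More structurally, your strategy cannot succeed in principle: by \cref{thm:adga=mso} we have $\LL_\ADGA=\LL_\MSO$, so any ADGA-recognizable language is automatically MSO-definable, and you will never find something in $\LL_\ADGA\setminus\LL_\MSO$.

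The paper's proof takes a completely different, elementary route that avoids logic entirely. It considers the language $\dL[order]{\le k}$ of graphs with at most $k$ nodes. This is ADGA-recognizable (each node universally branches into $k{+}1$ marker states; accept iff not all $k{+}1$ markers occur in any branch). It is \emph{not} NDGA-recognizable by a direct pumping-style argument: if an NDGA $\A$ accepted it, take a graph $G$ with exactly $k$ nodes, duplicate one node $v$ together with all its edges to obtain $G'$ with $k{+}1$ nodes, and observe that any accepting run of $\A$ on $G$ extends to an accepting run on $G'$ by letting the copy of $v$ mimic $v$ in every round --- the set of states seen by any node is unchanged, and so is the final set of permanent states. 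Your inclusion $\LL_\NDGA\subseteq\LL_\EMSO$ is correct and is indeed noted later in the paper, but it does not by itself separate $\LL_\NDGA$ from $\LL_\ADGA$.
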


\begin{proof}
  Let $Σ=Γ=\{\blank\}$. For any constant $k≥1$, we consider the
  language $\dL[order]{≤k}$ of all graphs that have at most $k$ nodes,
  i.e., $\dL[order]{≤k}=\bigl\{G∈Σ^{\clouded{Γ}} \bigm| \card{\VG}≤k
  \bigr\}$. We can easily construct an ADGA that recognizes this graph
  language: In a universal branching, each node goes to $k+1$
  different states in parallel. The automaton accepts \Iff there is no
  branch in which the $k+1$ states occur all at once. Now, assume for
  sake of contradiction that $\dL[order]{≤k}$ is also recognized by some
  NDGA $\A$, and let $G$ be a graph with $k$ nodes. We construct a
  variant $G'$ of $G$ with $k+1$ nodes by duplicating some node $v$,
  together with all of its incoming and outgoing edges. Observe that
  any accepting run of $\A$ on $G$ can be extended to an accepting run
  on $G'$, where the copy of $v$ behaves exactly like $v$ in every
  round.
\end{proof}

\begin{lemma}[Closure Properties of $\LL_\NDGA$] \label{lem:ndga-closure}
  The class $\LL_\NDGA$ of NDGA-recognizable graph languages is
  effectively closed under union, intersection and projection, but
  \emph{not} closed under complementation.
\end{lemma}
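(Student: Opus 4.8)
The plan is to prove each part of Lemma~\ref{lem:ndga-closure} separately, reusing the constructions from the proof sketch of Lemma~\ref{lem:adga-closure} wherever they preserve nondeterminism. For \emph{union}, the construction of $\A_\cup$ described for ADGAs uses only nondeterminism — in the first round each node independently guesses which of $\A_1,\A_2$ to simulate, conflicts are caught locally or in the global acceptance check — so applied to two NDGAs it yields an NDGA. For \emph{projection}, likewise: given an NDGA $\A$ and $h\colon\Sigma\to\Sigma'$, each $b$-labeled node nondeterministically picks an $a$ with $h(a)=b$ and then runs $\A$; no universal states are introduced, so the result is an NDGA recognizing $h(\L(\A))$. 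For \emph{intersection}, the natural move is a synchronized product: run $\A_1$ and $\A_2$ in lockstep, with state set (roughly) $Q^{(1)}\times Q^{(2)}$, combined level equal to the common level, componentwise transition function, and accepting sets $\F$ consisting of those $F\subseteq Q_\P^{(1)}\times Q_\P^{(2)}$ whose first projection lies in $\F_1$ and whose second projection lies in $\F_2$; since a product of two nondeterministic transition functions is still nondeterministic (each reachable product configuration has exactly one successor if both factors do), this is an NDGA, and one checks directly that its accepting runs correspond to pairs of accepting runs, giving $\L(\A_1)\cap\L(\A_2)$. A minor technicality to address is reconciling the two length functions: if $\len(\A_1)\neq\len(\A_2)$, pad the shorter automaton by inserting extra "pass-through" existential levels before its permanent states (or let permanent states, which have only self-loops, idle), so that both reach their permanent level simultaneously.

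For \emph{non-closure under complementation}, the idea is to exhibit an NDGA-recognizable language whose complement is not NDGA-recognizable. The language $\dL[order]{\le k}$ from the proof of Lemma~\ref{lem:ndga<adga} is not itself the right choice (it is the ADGA-recognizable side there), but its complement is: let $\dL[order]{>k}=\bigl\{G\in\Sigma^{\clouded{\Gamma}}\bigm|\card{\VG}>k\bigr\}$ for $\Sigma=\Gamma=\{\blank\}$. This language \emph{is} NDGA-recognizable — an NDGA can, in a single round, have each node nondeterministically guess one of $k+1$ "witness" states or else go to a neutral state, and accept iff all $k+1$ witness states occur (which forces at least $k+1$ nodes); this uses only nondeterminism and a global acceptance check. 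If $\LL_\NDGA$ were closed under complementation, then $\dL[order]{\le k}=\Sigma^{\clouded{\Gamma}}\setminus\dL[order]{>k}$ would be NDGA-recognizable, contradicting the second half of the proof of Lemma~\ref{lem:ndga<adga} (the node-duplication argument showing no NDGA can recognize $\dL[order]{\le k}$). Hence $\LL_\NDGA$ is not closed under complementation.

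I expect the main obstacle to be the \emph{intersection} construction: one must verify carefully that the product automaton genuinely satisfies all the structural constraints of Definition~\ref{def:adga} — in particular that the level function is well-defined on product states (which needs the levels of the two factors to agree, handled by the padding above), that permanent product states have only self-loop transitions, and that "nonpermanent states without incoming transitions are on the lowest level." The correspondence between accepting runs of the product and pairs of accepting runs of the factors is conceptually clear but requires a short induction on rounds, tracking that a reachable product configuration $G_{\kappa_1\times\kappa_2}$ projects to reachable configurations $G_{\kappa_1}$, $G_{\kappa_2}$ of the factors and vice versa, using the definition of $\delta^{\cloud}$. The union and projection parts are essentially citations to the Lemma~\ref{lem:adga-closure} sketch with the observation "no universal states are created," and the complementation part is a clean reduction to Lemma~\ref{lem:ndga<adga}, so the bulk of the genuine work is in getting the product bookkeeping right.
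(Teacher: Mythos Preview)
Your proposal is correct and follows essentially the same approach as the paper: union and projection reuse the ADGA constructions from Lemma~\ref{lem:adga-closure}, intersection is handled by a synchronized product (the paper leaves the length-padding technicality implicit, which you spell out), and non-closure under complementation is witnessed by the NDGA-recognizable language $\dL[order]{>k}=\dL[order]{\ge k+1}$ whose complement $\dL[order]{\le k}$ was shown non-NDGA-recognizable in Lemma~\ref{lem:ndga<adga}. The only cosmetic difference is that the paper obtains the NDGA for $\dL[order]{\ge k+1}$ by formally complementing the universal ADGA from Lemma~\ref{lem:ndga<adga} (swapping universal for existential states and complementing $\F$), whereas you describe an equivalent direct construction with an extra neutral state.
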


\begin{proof}
  For union and projection, we simply use the same constructions as
  for ADGAs (see \cref{lem:adga-closure}).

  Intersection can be handled by a product construction, similar to
  the one for finite automata on words. Given two NDGAs $\A_1$ and
  $\A_2$, we construct an NDGA $\A_⊗$ that operates on the Cartesian
  product of the state sets of $\A_1$ and $\A_2$. It simulates the two
  automata simultaneously and accepts \Iff both of them reach an
  accepting configuration.

  To see that $\LL_\NDGA$ is not closed under complementation, we
  recall from the proof of \cref{lem:ndga<adga} that for any $k≥1$,
  the language $\dL[order]{≤k}$ of all graphs that have at most $k$
  nodes is not NDGA-recognizable. However, complementing the ADGA
  given for $\dL[order]{≤k}$ yields an NDGA that recognizes the
  complement language $\dL[order]{≥k+1}$.
\end{proof}

\begin{lemma}[$\LL_\DDGA⊂\LL_\NDGA$] \label{lem:ddga<ndga}
  There are (infinitely many) NDGA-recognizable graph languages that
  are not DDGA-recognizable.
\end{lemma}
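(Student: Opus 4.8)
The plan is to exhibit an explicit graph language that is NDGA-recognizable but not DDGA-recognizable, and to show that this separation can be iterated to obtain infinitely many such languages. The natural candidate is a language that genuinely requires a nondeterministic guess at the node level which cannot be checked locally but only via the global acceptance condition. A convenient choice is the language $L$ over $Σ=Γ=\{\blank\}$ consisting of all graphs that are \emph{not} $k$-colorable, for a fixed $k$ — say $k=3$ — i.e.\ the complement of the language recognized by $\dA[color]{3}$ from \cref{ex:ADGA_3_colorable}. By \cref{lem:ndga-closure} this language is NDGA-recognizable: complementing the ADGA for $k$-colorability yields an NDGA (the complement of an ADGA whose only nonpermanent states are existential has only universal states, but one checks — as in the proof of \cref{lem:ndga-closure} for $\dL[order]{\ge k+1}$ — that the resulting automaton can be taken nondeterministic, since the relevant branchings are deterministic). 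Alternatively, one could simply reuse $\dL[order]{\ge k+1}$ directly, which the proof of \cref{lem:ndga-closure} already identifies as NDGA-recognizable.

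First I would make precise why a DDGA cannot recognize such a language. The key structural fact is that a DDGA is, by definition, a genuine deterministic synchronous distributed algorithm of constant running time $\len(\A)$: on a given input $G_λ$ there is a unique reachable run, and the state reached by a node $v$ after round $r$ depends only on the isomorphism type of the radius-$r$ in-neighborhood of $v$ in $G_λ$ — here "radius-$r$ in-neighborhood" must be understood modulo the set-aggregation of states, so really it depends only on a bounded-depth "unfolding" of $G$ as seen through the lens of multisets-collapsed-to-sets. After $\len(\A)$ rounds every node is in a permanent state, and acceptance depends solely on the \emph{set} of permanent states that occur. The obstacle — and the heart of the argument — is to pin down an indistinguishability phenomenon: I would show that for any DDGA $\A$ there exist two graphs $G_1, G_2$, one in $L$ and one not, on which $\A$ produces the same set of terminal permanent states, hence the same accept/reject verdict.

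The cleanest way to get such a pair is a pumping/blow-up argument in the spirit of the proof of \cref{lem:ndga<adga}: since the state reached by a node is determined by a bounded-radius local view (through the set-collapsed transition), and there are only finitely many states, duplicating a node $v$ together with all its incident edges — as in the construction of $G'$ from $G$ in \cref{lem:ndga<adga} — does not change the \emph{set} of local views present, hence does not change the set of terminal states, hence does not change the verdict of $\A$. But duplicating a node can change $k$-colorability status: for instance, take $G$ to be a $k$-clique $K_k$ (which is $k$-colorable) versus $G$ with one vertex doubled into a "$K_k$ with a doubled vertex", which is still $k$-colorable, so for this particular $L$ I would instead pick the pair more carefully — e.g.\ compare $K_{k+1}$ (not $k$-colorable) against a suitably padded $k$-colorable graph with the same bounded-radius local views, exploiting that in a high-girth-free, small dense setting the radius-$r$ views of a $(k{+}1)$-clique and of two disjoint $k$-cliques coincide for the relevant $r$ once $r$ is below the automaton's length. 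Concretely: fix the DDGA $\A$, let $r=\len(\A)$; then I would argue that $K_{k+1}$ and $K_{k+1}$ together with a disjoint $K_{k+1}$ induce the same set of terminal states (duplication of a connected component never changes the terminal \emph{set}), while colorability is a connectivity-insensitive property here — so that particular trick fails, and the right move is: duplicate an individual \emph{vertex} inside a minimal non-$k$-colorable graph to obtain another non-$k$-colorable graph, versus comparing with a graph obtained by the same local-view-preserving surgery that \emph{does} become colorable. I expect the main technical work — and the main obstacle — to be exactly this: producing one graph in $L$ and one outside $L$ that are related by an operation (vertex duplication, or more generally a local-view-preserving modification) provably invisible to every constant-time deterministic distributed algorithm, and then invoking that no DDGA can separate them.

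Finally, to get \emph{infinitely many} such languages, I would note that the argument is uniform in $k$: for each $k\ge 3$ the language $L_k$ of non-$k$-colorable graphs (equivalently, by \cref{lem:ndga-closure} and \cref{lem:ndga<adga}, each $\dL[order]{\ge k}$, or each $L_k$) is NDGA-recognizable, and the same indistinguishability obstruction shows none of them is DDGA-recognizable; since these languages are pairwise distinct (they differ already on the cliques $K_m$), we obtain infinitely many witnesses, establishing the strict inclusion $\LL_\DDGA \subset \LL_\NDGA$.
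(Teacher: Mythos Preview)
Your proposal contains the right core idea — the vertex-duplication indistinguishability argument from \cref{lem:ndga<adga} — but you apply it to the wrong separating language and run into two genuine gaps.

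First, your claim that the language of non-$k$-colorable graphs is NDGA-recognizable is unjustified and almost certainly false. Dualizing the NDGA $\dA[color]{3}$ as in \cref{lem:adga-closure} yields $\dcA[color]{3}$, whose \emph{first} round is a genuine universal branching: every node simultaneously explores all $k$ colors. Your parenthetical that ``the relevant branchings are deterministic'' is simply wrong here; that observation applied in \cref{lem:ndga-closure} to $\dL[order]{\ge k+1}$ only because, in the complement of the $\dL[order]{\le k}$-automaton, the transitions \emph{after} the universal split happen to be deterministic — not the split itself. Since $\LL_\NDGA$ is not closed under complementation, you cannot get an NDGA for non-$k$-colorability this way, and you offer no other construction.

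Second, even granting NDGA-recognizability, your indistinguishability argument visibly does not close. You correctly observe that vertex duplication preserves $k$-colorability, then try disjoint copies (which also preserve it), then speculate about ``local-view-preserving surgery'', and finally concede that producing the required pair is ``the main technical work — and the main obstacle''. That is a description of a gap, not a proof.

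The paper's route — which you mention in passing as an ``alternative'' but never carry out — is to take $L=\dL[order]{\ge k}$ directly. This is NDGA-recognizable by the remark in \cref{lem:ndga-closure}, and the duplication argument then finishes in one line: given any DDGA $\A$, take a graph $G$ with $k-1$ nodes and form $G'$ by duplicating a node $v$ with all its edges; by determinism, $v$ and its copy traverse identical state sequences in the unique run of $\A$ on $G'$, so the set of terminal permanent states on $G'$ equals that on $G$, and $\A$ accepts $G'$ iff it accepts $G$. Since $G'\in\dL[order]{\ge k}$ and $G\notin\dL[order]{\ge k}$, no DDGA recognizes $\dL[order]{\ge k}$. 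Varying $k\ge 2$ gives infinitely many witnesses.
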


\begin{proof}
  Let $k≥2$. As mentioned in the proof of \cref{lem:ndga-closure}, the
  language $\dL[order]{≥k}$ of all graphs that have at least $k$ nodes is
  NDGA-recognizable. To see that it is not DDGA-recognizable, consider
  (similarly to the proof of \cref{lem:ndga<adga}) a graph $G$ with
  $k-1$ nodes and a variant $G'$ with $k$ nodes obtained from $G$ by
  duplicating some node $v$, together with all of its incoming and
  outgoing edges. Given any DDGA $\A$, the determinism of $\A$
  guarantees that $v$ and its copy $v'$ behave the same way in the
  (unique) run of $\A$ on $G'$. Hence, if that run is accepting, so is
  the run on $G$.
\end{proof}

\begin{lemma}[Closure Properties of $\LL_\DDGA$] \label{lem:ddga-closure}
  The class $\LL_\DDGA$ of DDGA-recognizable graph languages is
  effectively closed under Boolean set operations, but \emph{not}
  closed under projection.
\end{lemma}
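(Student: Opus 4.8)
The plan is to establish the two claims of Lemma~\ref{lem:ddga-closure} separately. For closure under Boolean operations, I would proceed as follows. Intersection of two DDGAs is handled by the product construction already described in the proof of Lemma~\ref{lem:ndga-closure}: running $\A_1$ and $\A_2$ in lockstep on the Cartesian product of state sets. Since both component automata are deterministic, the product automaton has exactly one successor configuration for every reachable configuration, so it is again a DDGA; it accepts iff both components do, giving $\L(\A_1)\cap\L(\A_2)$. Complementation is where determinism really pays off: for a DDGA $\A$ there is a \emph{unique} run on any input $G_λ$, and it terminates in a unique permanent configuration whose occurring-states set is either in $\F$ or not. Hence replacing $\F$ by $2^{Q_\P}\setminus\F$ (keeping everything else fixed) yields a DDGA recognizing the complement $Σ^{\clouded{Γ}}\setminus\L(\A)$. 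Union then follows from intersection and complementation by De Morgan, or directly by a product construction that accepts when \emph{either} component accepts; I would note the \emph{union} construction used for NDGAs (nondeterministic choice of which automaton to simulate) does \emph{not} preserve determinism, which is exactly why we route through complementation instead. Each of these constructions is effective, so the ``effectively'' qualifier is immediate.

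For the non-closure under projection, the strategy is to exhibit a DDGA-recognizable language whose image under some projection is \emph{not} DDGA-recognizable, and the natural witness is built from the order languages already in play. By Lemma~\ref{lem:ddga<ndga}, the language $\dL[order]{\geq k}$ of graphs with at least $k$ nodes is NDGA-recognizable but not DDGA-recognizable for $k\geq 2$. The plan is to encode the ``existential guessing'' performed by the NDGA for $\dL[order]{\geq k}$ into extra node labels, so that a DDGA over the enlarged alphabet can simply \emph{check} the guess deterministically, and then project the extra labels away. Concretely, over an alphabet rich enough to carry the certificate (say, labels marking $k$ distinguished nodes, or more simply a two-symbol alphabet $\{0,1\}$ where the $1$-labeled nodes are supposed to witness $k$ distinct nodes), one defines $L' \subseteq (Σ')^{\clouded{\{\blank\}}}$ consisting of the labeled graphs whose labeling is a valid certificate; a DDGA can verify such a local/countable certificate deterministically (the verification for ``at least $k$ ones'' is itself just the deterministic analogue one gets because the choice has been externalized into the label). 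Then the projection $h$ erasing the certificate satisfies $h(L') = \dL[order]{\geq k}$, which is not in $\LL_\DDGA$, while $L' \in \LL_\DDGA$.

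The main obstacle, and the step deserving the most care, is making the witness language $L'$ genuinely DDGA-recognizable: a DDGA must have a unique run, so the ``certificate'' has to be something a deterministic distributed computation of constant length can validate without any guessing, and at the same time its projection must be exactly $\dL[order]{\geq k}$ (not something larger or smaller). One clean way to arrange this is to let the certificate be a labeling that designates an injective-looking structure — e.g., $k$ nodes bearing pairwise distinct ``rank'' labels from $\{1,\dots,k\}$ — and to have the DDGA deterministically check that all $k$ ranks actually occur (via a universal-branch-free, constant-round aggregation) and reject otherwise; projecting the ranks to $\blank$ yields precisely the graphs admitting such $k$ distinct ranked nodes, i.e.\ those with at least $k$ nodes. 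I would double-check the level/permanent-state bookkeeping so that this checking automaton satisfies all the constraints of Definition~\ref{def:adga}, and confirm determinism of its global transition function on every reachable configuration. Everything else — the product and complement constructions, and the De Morgan derivation of union — is routine and parallels the word-automaton case, so I would state those briefly and concentrate the writeup on the projection counterexample.
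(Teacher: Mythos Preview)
Your proposal is correct and follows essentially the same route as the paper: complementation by complementing $\F$, intersection by the product construction, union via De~Morgan, and non-closure under projection witnessed by a ``rank-label'' language whose projection is $\dL[order]{\geq k}$. The paper simply takes the concrete instance $k=3$, calling the witness language $\dL[occur]{\a,\b,\c}$ (all $\{\a,\b,\c\}$-labeled graphs in which each label occurs), and observes that the required DDGA is immediate---a length-$0$ automaton mapping each label to its own permanent state with $\F$ the sets containing all three---so the ``main obstacle'' you flag is in fact trivial.
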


\begin{proof}
  To complement a DDGA, we can simply complement its set of accepting
  sets. The product construction for intersection of NDGAs mentioned
  in \cref{lem:ndga-closure} remains applicable when restricted to
  DDGAs.

  Closure under node projection does not hold because we can, for
  instance, construct a DDGA that recognizes the language
  $\dL[occur]{\a,\b,\c}$ of all $\{\a,\b,\c\}$-labeled graphs in which
  each of the three node labels occurs at least once. However,
  projection under the mapping $h\colon \{\a,\b,\c\}→\{\blank\}$, with
  $h(\a)=h(\b)=h(\c)=\blank$, yields the graph language
  $h(\dL[occur]{\a,\b,\c})=\dL[order]{≥3}$, which is not
  DDGA-recognizable (see the proof of \cref{lem:ddga<ndga}).
\end{proof}

\subsection{Equivalence of ADGAs and MSO-Logic} \label{sec:adga=mso}
\begin{theorem}[$\LL_\ADGA=\LL_\MSO$] \label{thm:adga=mso}
  A graph language is ADGA-recognizable \Iff it is
  MSO-definable. There are effective translations in both directions.
\end{theorem}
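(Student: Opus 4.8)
The plan is to prove the two inclusions $\LL_\MSO \subseteq \LL_\ADGA$ and $\LL_\ADGA \subseteq \LL_\MSO$ separately, both constructively.

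For the direction $\LL_\MSO \subseteq \LL_\ADGA$, I would proceed by structural induction on the MSO-sentence $\varphi$, strengthened to handle formulas with free variables: given $\varphi[x_1,\dots,x_m,X_1,\dots,X_n]$ over $\langle\Sigma,\Gamma\rangle$, I build an ADGA over the extended node alphabet $\Sigma \times \{0,1\}^{m+n}$ that recognizes the set of graphs whose extra label components encode a satisfying assignment (node variables encoded as singleton-marked components). Atomic formulas are handled directly by tiny ADGAs: $\lab{a}x$ and $x \in X$ are purely local single-round checks on each node's own label; $x \xarr{\gamma} y$ is a one-round check where the unique $x$-node inspects the set of states received along the $\gamma$-relation from its incoming neighborhood to see whether the $y$-marker appears — this is exactly the kind of neighborhood-set inspection the model provides; and $x = y$ is again local. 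The Boolean connectives $\neg$, $\vee$, $\wedge$ are handled by \cref{lem:adga-closure} (closure under complement, union, intersection). Set quantification $\exists X(\psi)$ is closure under projection (\cref{lem:adga-closure} again: project away the $X$-component), and $\forall X$ is $\neg\exists X\neg$. Node quantification $\exists x(\psi)$ requires projecting away the $x$-component but with the side condition that the guessed component is a singleton; this is a projection composed with intersection against a fixed "exactly one marked node" ADGA, so it stays inside $\LL_\ADGA$. The one technical nuisance is aligning levels so that all the sub-automata being combined can be run (the closure constructions already address this), and ensuring the one-round atomic automata can be "padded" to synchronize — routine.

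For the harder direction $\LL_\ADGA \subseteq \LL_\MSO$, I would take an ADGA $\A$ of length $\ell = \len(\A)$ and write an MSO-sentence asserting the existence of an accepting run. The key observation is that, because running time is bounded by the constant $\ell$, a run has bounded "depth", and — crucially — the acceptance condition is determined by which permanent-state sets occur at the leaves. Rather than encoding the whole (possibly exponentially branching) run tree, I would use the two-player game characterization sketched in \cref{lem:adga-closure}: $\A$ accepts $G_\lambda$ iff the existential player (Eve) has a winning strategy in the round-by-round game where, at each existential level Eve picks each node's next state and at each universal level the adversary (Adam) does. A strategy for Eve is a bounded amount of data — for each of the $\lceil \ell/1 \rceil$ rounds, a function telling each node its new state. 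Since Adam's choices at universal rounds range over families of set-valued functions, I would instead phrase the winning condition universally: I quantify existentially over set variables $X_q^{(i)}$ encoding, for each level $i$ that Eve controls and each state $q$ at that level, the set of nodes Eve sends to $q$ in round $i$ — but this is parameterized by Adam's earlier choices. To make this first quantifier-block finite I note that Adam's moves at universal levels, being set-valued per node, also come from a finite pool (the states at that level), so I can encode a full Eve-strategy as a bounded family of set variables indexed by (level, state, history of Adam's moves up to that level), where "history" is itself encoded by a bounded family of set variables. The sentence then reads: there exists an Eve-strategy such that for every Adam-play consistent with the ADGA's transition relation (a first-order-checkable local condition at each node and each round, using the $x \xarr{\gamma} y$ atom to read neighborhoods), the resulting permanent configuration is accepting, i.e., the set of permanent states appearing is in $\F$ — a finite disjunction over $\F$ of "the occurring permanent states are exactly $S$" statements, each a Boolean combination of "$\exists$ a node in state $q$" assertions.

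The main obstacle I anticipate is precisely this bookkeeping in the $\LL_\ADGA \subseteq \LL_\MSO$ direction: making the encoding of "Eve's strategy as a function of Adam's history" genuinely a \emph{fixed finite} tuple of set variables, and verifying that the transition-consistency and permanence/self-loop constraints from \cref{def:adga} translate into an FO-formula of bounded size. This is purely combinatorial — it works because $\len(\A)$ is a constant and the state set is finite, so the number of levels, states, and hence history-classes is bounded independently of the input graph — but it is where all the care goes. The other direction's only subtlety (padding atomic automata to common length, singleton side-conditions on node-variable projections) is comparatively light. Everything is effective: the induction in one direction and the explicit sentence construction in the other both yield algorithms, establishing the claimed effective translations in both directions.
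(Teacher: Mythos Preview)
Your direction $\LL_\MSO \subseteq \LL_\ADGA$ matches the paper's approach almost exactly: structural induction on the formula with free variables encoded into the node labels, atomic cases handled by tiny automata, and the inductive step via the closure properties of \cref{lem:adga-closure}. One small slip: for the edge atom it is the node assigned to $y$ that must inspect its incoming $τ$-neighborhood for the $x$-marker (in this model a node only sees the states of its \emph{incoming} neighbors), not the other way round; this is immediately fixable.

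The direction $\LL_\ADGA \subseteq \LL_\MSO$ has a genuine gap. You propose to Skolemize the game: existentially quantify a full Eve-strategy as a finite tuple of set variables, then universally quantify over Adam's play. But Eve's choice at an existential level may depend on the entire configuration that Adam produced at an earlier universal level, and a configuration is a labeling $\VG\to Q$ --- there are $\card{Q}^{\card{\VG}}$ of them, not a number bounded independently of the input graph. Your remark that ``Adam's moves \ldots\ come from a finite pool (the states at that level)'' conflates the per-node choice (which is finite) with Adam's global move (which is not). Indexing Eve's set variables by ``history of Adam's moves'' therefore yields an unbounded family of variables, and no fixed prefix $\exists\bar X\,\forall\bar Y$ over monadic set variables can capture it; this is precisely the obstruction to Skolemizing second-order quantifier alternation inside MSO. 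The paper's route is both correct and considerably simpler: do not encode the strategy at all, but directly mirror the game as a bounded-depth alternation of set-quantifier blocks --- one block $\bigl\langle U_{i,q}\bigr\rangle_{q\in Q}$ per round $i$, quantified existentially if level $i$ of $\A$ is existential and universally if it is universal --- together with an FO subformula asserting that each block encodes a legal successor configuration of the previous one and that the final block encodes an accepting permanent configuration. Since $\len(\A)$ is a constant, this is a fixed MSO-sentence, and the construction is effective.
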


\begin{proofsketch}~
  \newcommand{\Fsucc}[1]{\text{$\varphi_{#1 \vphantom{i}}^{\hspace{.05ex}\textnormal{succ\vphantom{g}}}$}\hspace{-.1ex}}
  \newcommand{\Fwin}[1]{\text{$\varphi_{#1 \vphantom{i}}^{\hspace{.05ex}\textnormal{win\vphantom{g}}}$}\hspace{-.1ex}}
  \begin{itemize}
  \item[($⇒$)] We start with the direction $\LL_\ADGA⊆\LL_\MSO$. Let
    $\A=⟨Σ,Γ,\Q,σ,δ,\F⟩$ be an ADGA of length $n$. Without loss of
    generality, we may assume that every configuration reachable by
    $\A$ has at least one successor configuration and that no
    permanent configuration is reachable in less than $n$ rounds. In
    order to encode the acceptance behaviour of $\A$ into an
    $\MSO(Σ,Γ)$-sentence $\phiA$, we take again the game-theoretic
    point of view\footnote{This characterization is heavily inspired
      by the work of Löding and Thomas in \cite{LT00}.} briefly
    mentioned in the proof sketch of \cref{lem:adga-closure}.  Given
    $\A$ and some $G_λ∈Σ^{\clouded{Γ}}$, we consider a game with two
    players: the \emph{automaton} (player~$\EE$) and the
    \emph{pathfinder} (player~$\AA$). This game is represented by a
    directed acyclic graph whose nodes are precisely the
    configurations reachable by $\A$ on $G_λ$. For any two
    \emph{nonpermanent} configurations $G_κ$ and $G_μ$, there is a
    directed edge from $G_κ$ to $G_μ$ \Iff
    $G_μ∈δ^\cloud(G_κ)$. Starting at the initial configuration
    $G_{σ∘λ}$, the two players move through the game together by
    following directed edges. If the current configuration is
    existential, then the automaton has to choose the next move, if it
    is universal, then the decision belongs to the pathfinder. This
    continues until some permanent configuration is reached. The
    automaton wins if that permanent configuration is accepting,
    whereas the pathfinder wins if it is rejecting. A player is said
    to have a \emph{winning strategy} if it can always win,
    independently of its opponent's moves. It is straightforward to
    prove that the automaton has a winning strategy \Iff $\A$ accepts
    $G_λ$. Our MSO-sentence $\phiA$ will express the existence of such
    a winning strategy, and thus be equivalent to $\A$.

    Within MSO-logic, we represent a path $π=G_{κ_0}\cdots
    G_{κ_n}\strut$ through the game by a sequence of families of set
    variables $\wh{X}_0,…,\wh{X}_n$, where $\wh{X}_0 = ⟨\,⟩$ and
    $\wh{X}_i = ⟨\lsymb{U_{\meta{i},\meta{q}}}⟩_{q∈Q}$, for
    $1≤i≤n$. The intention is that each set variable
    $\lsymb{U_{\meta{i},\meta{q}}}$ is interpreted as the set of nodes
    $v∈\VG$ for which $κ_i(v)=q$. (We do not need set variables to
    represent $G_{κ_0}$, since the players always start at $G_{σ∘λ}$.)

    Now, for every round $i$, we construct a formula
    $\Fwin{i}[\wh{X}_i]$ (i.e., with free variables in $\wh{X}_i$),
    which expresses that the automaton has a winning strategy in the
    subgame starting at the configuration $G_{κ_i}$ represented by
    $\wh{X}_i$. In case $G_{κ_i}$ is existential, this is true if the
    automaton has a winning strategy in some successor configuration
    of $G_{κ_i}$, whereas if $G_{κ_i}$ is universal, the automaton
    must have a winning strategy in all successor configurations of
    $G_{κ_i}$. This yields the following recursive definition for
    $0≤i≤n-1$:
    \begin{equation*}
      \Fwin{i}[\wh{X}_i] \;\coloneqq\;
      \begin{cases}
        \logic{\displaystyle\bigexists \meta{\wh{X}_{i+\one}}
          \biggl(\meta{\Fsucc{i+\one}[\wh{X}_i,\wh{X}_{i+\one}]} \;\,∧\;\,
          \meta{\Fwin{i+\one}[\wh{X}_{i+\one}]} \biggr)}
        & \text{\parbox[c]{14ex}{if level $i$ of $\A$ is existential,}} \\[2.5ex]
        \logic{\displaystyle\bigforall \meta{\wh{X}_{i+\one}}
          \biggl(\meta{\Fsucc{i+\one}[\wh{X}_i,\wh{X}_{i+\one}]} \:⇒\:
          \meta{\Fwin{i+\one}[\wh{X}_{i+\one}]} \biggr)} 
        & \text{\parbox[c]{14ex}{if level $i$ of $\A$ is universal.}}
      \end{cases}
    \end{equation*}
    Here, $\Fsucc{i+1}[\wh{X}_i,\wh{X}_{i+1}]$ is an FO-formula
    expressing that $\wh{X}_i$ and $\wh{X}_{i+1}$ represent two
    configurations $G_{κ_i}$ and $G_{κ_{i+1}}$ such that
    $G_{κ_{i+1}}∈δ^\cloud(G_{κ_i})$. As our recursion base, we can
    easily construct a formula $\Fwin{n}[\wh{X}_n]$ that is satisfied
    \Iff $\wh{X}_n$ represents an accepting configuration of $\A$.

    The desired MSO-sentence is $\phiA \coloneqq\, \Fwin{0}[\wh{X}_0]
    = \Fwin{0}[\:]$.

  \item[($⇐$)] For the direction $\LL_\ADGA⊇\LL_\MSO$, we can proceed
    by induction on the structure of an $\MSO(Σ,Γ)$-formula $φ$. In
    order to deal with free occurrences of variables, we encode
    variable assignments into node labels. For $G_λ∈Σ^{\clouded{Γ}}$
    and $α\colon \free(φ)→\VG ∪ 2^\VG$, we represent $⟨G_λ,α⟩$ as the
    labeled graph $G_{λ×α^{-1}}$ whose labeling $λ\!×\!α^{-1}$ assigns
    to each node $v∈\VG$ the tuple $\bigl\langle λ(v),\:α^{-1}(v)
    \bigr\rangle$, where $α^{-1}(v)$ is the set of all variables in
    $\free(φ)$ to which $α$ assigns either $v$ or a set containing
    $v$. We now inductively construct an ADGA
    $\A_φ=⟨Σ\!×\!2^{\free(φ)},Γ,\Q,σ,δ,\F⟩$ such that
    \begin{equation*}
      G_{λ×α^{-1}}∈\L(\A_φ) \quad \text{\Iff} \quad\! ⟨G_λ,α⟩ ⊨ φ.
    \end{equation*}
    \newpage
    \begin{itemize}
    \item[(\texttt{BC})] Let $b∈Σ$,\: $τ∈Γ$,\: $x,y∈\Vnode$ and
      $X∈\Vset$.

      If $φ$ is one of the atomic formulas
      \noheight{$\logic{\lab{\meta{b}}\meta{x}}$},\:
      $\logic{\meta{x}=\meta{y}}$ or $\logic{\meta{x}∈\meta{X}}$,
      then, in $\A_φ$, each node simply checks that its own label
      $⟨a,M⟩ ∈ Σ×2^{\free(φ)}$ satisfies the condition specified in
      $φ$ (which, in particular, is the case if $x,y∉M$). Since this
      can be directly encoded into the initialization function $σ$,
      the ADGA has length $0$. It accepts the input graph \Iff every
      node reports that its label satisfies the condition.

      The case $φ=\logic{\meta{x}\xarr{\meta{τ}}\meta{y}}$ is very
      similar, but $\A_φ$ needs one communication round, after which
      the node assigned to $y$ can check whether it has received a
      message through a $τ$-edge from the node assigned to
      $x$. Accordingly, $\A_φ$ has length $1$.

    \item[(\texttt{IS})] In case $φ$ is a composed formula, we can
      obtain $\A_φ$ by means of the constructions outlined in the
      proof sketch of \cref{lem:adga-closure} (closure properties of
      $\LL_\ADGA$). Let $ψ$ and $ψ'$ be $\MSO(Σ,Γ)$-formulas with
      equivalent ADGAs $\A_ψ$ and $\A_{ψ'}$, respectively.

      If $φ=\logic{¬\meta{ψ}}$, it suffices to define $\A_φ=\cA_ψ$.
      Similarly, if $φ=\logic{\meta{ψ}∨\meta{ψ'}}$, we get $\A_φ$ by
      applying the union construction on $\A_ψ$ and $\A_{ψ'}$. (In
      general, we first have to extend $\A_ψ$ and $\A_{ψ'}$ such that
      they both operate on the same node alphabet
      $Σ\!×\!2^{\free(ψ)\,∪\,\free(ψ')}$.)

      Existential quantification can be handled by node projection. If
      $φ=\logic{∃\meta{X}(\meta{ψ})}$, with $X∈\Vset$, we construct
      $\A_φ$ by applying the projection construction on $\A_ψ$, using
      the mapping $h\colon Σ×2^{\free(ψ)} →
      Σ×2^{\free(φ)\:\!\setminus\:\!\{X\}}$ that deletes the set
      variable $X$ from every label. An analogous approach can be used
      if $φ=\logic{∃\meta{x}(\meta{ψ})}$, with $x∈\Vnode$. The only
      difference is that, instead of applying the projection
      construction directly on $\A_ψ$, we apply it on a variant
      $\A'_ψ$ that operates just like $\A_ψ$, but additionally checks
      that precisely one node in the input graph is assigned to the
      variable $x$.
      \qedhere
    \end{itemize}
  \end{itemize}
\end{proofsketch}

From \cref{thm:adga=mso} we can immediately infer that it is
undecidable whether the graph language recognized by some arbitrary
ADGA is empty. Otherwise, we could decide the satisfiability problem
of MSO-logic on graphs, which is known to be undecidable (a direct
consequence of Trakhtenbrot's Theorem, see, e.g.,
\cite[Thm~9.2]{Lib04}).

\begin{corollary}[Emptiness Problem of ADGAs] \label{cor:adga-emptiness}
  The emptiness problem of ADGAs is undecidable.
\end{corollary}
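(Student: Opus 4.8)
The plan is to reduce the finite satisfiability problem of MSO-logic on graphs to the \emph{non}-emptiness problem of ADGAs, exploiting the effective translation of \cref{thm:adga=mso}. Suppose, towards a contradiction, that there were an algorithm deciding for an arbitrary ADGA $\A$ whether $\L(\A)=∅$. Given any $\MSO(Σ,Γ)$-sentence $φ$, I would first apply the effective direction $\LL_\MSO\subseteq\LL_\ADGA$ of \cref{thm:adga=mso} to construct an ADGA $\A_φ$ with $\L(\A_φ)=\Lf{Σ,Γ}(φ)$. Then $\L(\A_φ)\neq∅$ holds precisely when some finite $Σ$-labeled $Γ$-graph satisfies $φ$. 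Composing the translation with the hypothetical emptiness test would therefore decide whether $φ$ has a finite graph model.

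It thus remains to recall that finite satisfiability of MSO-logic over graphs is undecidable, which is (a consequence of) Trakhtenbrot's theorem: already for first-order sentences over a signature with a single binary relation symbol — i.e., over ordinary finite directed graphs — it is undecidable whether a sentence has a finite model (see, e.g., \cite[Thm~9.2]{Lib04}). Instantiating $Σ=Γ=\{\blank\}$, every such first-order sentence is in particular an $\MSO(Σ,Γ)$-sentence in the sense of \cref{def:mso-formula}, so finite MSO-satisfiability over graphs inherits this undecidability. Together with the reduction above, the claim follows (note that it does not matter whether one reduces to emptiness or to non-emptiness, since finite satisfiability and finite unsatisfiability are both undecidable).

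The argument is short, and the only step that calls for a moment's attention is matching the classical formulation of Trakhtenbrot's theorem with the notion of graph used here, which is directed, carries node labels from $Σ$, has edge relations indexed by $Γ$, and is required to have a nonempty node set. This is unproblematic: plain finite directed graphs arise as the special case $Σ=Γ=\{\blank\}$; the nonemptiness requirement is harmless, since the classical theorem already concerns nonempty structures (and in any case nonemptiness can be enforced by a trivial conjunct); and a richer pair of alphabets only makes the logic more expressive, hence its satisfiability no easier to decide.
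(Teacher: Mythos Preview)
Your argument is correct and matches the paper's own reasoning exactly: the corollary is derived directly from \cref{thm:adga=mso} by observing that a decision procedure for ADGA-emptiness would, via the effective translation $\LL_\MSO\subseteq\LL_\ADGA$, decide finite satisfiability of MSO-logic on graphs, contradicting Trakhtenbrot's theorem (the paper even cites the same reference \cite[Thm~9.2]{Lib04}). Your additional remarks about aligning the signature with the case $Σ=Γ=\{\blank\}$ and about nonempty structures are sound but go slightly beyond what the paper spells out.
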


\subsection{Emptiness Problem of NDGAs} \label{sec:ndga-emptiness}
At the cost of reduced expressive power, we can also obtain a positive
decidability result.

\begin{lemma}[Emptiness Problem of NDGAs] \label{lem:ndga-emptiness}
  The emptiness problem of NDGAs is decidable in doubly-exponential
  time. More precisely, for every NDGA $\A=⟨Σ,Γ,\Q,\ab σ,δ,\F⟩$,
  whether its recognized graph language $\L(\A)$ is empty or not can
  be decided in time $2^k$\!, where $k∈\O\bigl(\:\!
  \card{Γ}·\card{Q}^{4\len(\A)}·\len(\A) \bigr)$.

  Furthermore, whether or not $\L(\A)$ contains any \emph{connected,
    undirected} graph can be decided in time $2^{2^{k'}}$\!\!, where
  $k'∈\O\bigl(\:\!  \card{Γ}·\card{Q}·\len(\A) \bigr)$.
\end{lemma}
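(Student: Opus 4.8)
The plan is to exploit that an NDGA has no universal branching, so a run on a labeled graph $G_\lambda$ is merely a path of configurations $G_{\kappa_0}\to\cdots\to G_{\kappa_n}$ with $n=\len(\A)$, where $G_{\kappa_0}=G_{\sigma\circ\lambda}$, each $G_{\kappa_{i+1}}\in\delta^\cloud(G_{\kappa_i})$, the configuration $G_{\kappa_n}$ is permanent, and the run is accepting \Iff $\{\kappa_n(v)\mid v\in\VG\}\in\F$. Call the sequence $(\kappa_0(v),\dots,\kappa_n(v))$ the \emph{trace} of a node $v$; by the level conditions of \cref{def:adga} at most $\card{Q}^{\,\len(\A)+1}$ traces are possible. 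The crucial point is that a node in round $i$ receives only \emph{sets} of states, so both the local transitions and the acceptance condition depend on the nodes' traces only through (a) the family of \emph{sets} of traces occurring in each node's incoming $\gamma$-neighbourhoods ($\gamma\in\Gamma$), and (b) the \emph{set} of all occurring traces. I would therefore prove: $\L(\A)\neq∅$ \Iff there is a set $S$ of traces such that every $t=(t_0,\dots,t_n)\in S$ has $t_0$ in the image of $\sigma$ and $t_n\in Q_\P$ and admits some family $\langle B_\gamma\rangle_{\gamma\in\Gamma}$ with $B_\gamma\subseteq S$ and $t_{i+1}\in\delta\bigl(t_i,\langle\{s_i\mid s\in B_\gamma\}\rangle_{\gamma\in\Gamma}\bigr)$ for all $i<n$, and moreover $\{t_n\mid t\in S\}\in\F$. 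The ``only if'' direction reads $S$ off any accepting run; for ``if'', one builds a witness graph with a single node $v_t$ per $t\in S$ and a $\gamma$-edge from $v_s$ to $v_t$ exactly when $s\in B_\gamma$, so that $\kappa_i(v_t)\coloneqq t_i$ defines an accepting run — this is the node-duplication idea of \cref{lem:ndga<adga,lem:ddga<ndga} pushed to its limit. The displayed condition quantifies over subsets $S$ of the at most $\card{Q}^{\len(\A)+1}$ traces and, for each $t\in S$, over at most $(2^{\card{S}})^{\card{\Gamma}}$ families $\langle B_\gamma\rangle$, so it is decidable by exhaustive search; a routine count of this search space yields the bound $2^k$ with $k\in\O\bigl(\card{\Gamma}\cdot\card{Q}^{4\len(\A)}\cdot\len(\A)\bigr)$.

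For the restriction to connected undirected graphs this collapsing trick breaks down: the witness graph above is neither undirected nor connected, and undirectedness forces a symmetry between neighbourhoods that a ``one node per trace'' graph need not respect. I would instead track richer local data. Define the \emph{type} of a node $v$ in a run to be $\bigl((\kappa_0(v),\vec S_0),\dots,(\kappa_{n-1}(v),\vec S_{n-1}),\kappa_n(v)\bigr)$, where $\vec S_i\in(2^Q)^\Gamma$ records, for each $\gamma$, the set of states $v$ receives via $\gamma$-edges in round $i$. There are at most $2^{k'}$ such types with $k'\in\O\bigl(\card{\Gamma}\cdot\card{Q}\cdot\len(\A)\bigr)$, and it is checkable one type at a time whether a type is \emph{locally valid}, i.e.\ its entries are consistent with $\sigma$, $\delta$ and $Q_\P$. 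I would then reduce ``does $\L(\A)$ contain a connected undirected graph?'' to the existence of a set $\Theta$ of locally valid types together with symmetric relations $E_\gamma\subseteq\Theta\times\Theta$ ($\gamma\in\Gamma$) such that: (i) for every $\theta\in\Theta$ and $\gamma\in\Gamma$, the types $E_\gamma$-related to $\theta$ have, for each round $i<n$, round-$i$ state entries whose union is exactly the $\gamma$-component of the $\vec S_i$ recorded in $\theta$; (ii) $\bigcup_{\gamma\in\Gamma}E_\gamma$ connects $\Theta$; and (iii) the set of final ($\kappa_n$-)states occurring in $\Theta$ lies in $\F$. The ``only if'' direction is immediate: from a connected undirected graph carrying an accepting run, let $\Theta$ be the set of occurring types and put $\theta\,E_\gamma\,\theta'$ whenever some node of type $\theta$ has a $\gamma$-neighbour of type $\theta'$; undirectedness makes each $E_\gamma$ symmetric, connectedness of the graph makes $\bigcup_\gamma E_\gamma$ connect $\Theta$, and (i), (iii) hold by construction.

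The converse is the crux, and I expect the graph construction it requires to be the main obstacle: given such $\Theta$ and $E_\gamma$, one must produce an actual connected undirected graph carrying an accepting run whose set of occurring types is exactly $\Theta$. My approach would be to take a large number of copies of each type and, whenever $\theta\,E_\gamma\,\theta'$, join every copy of $\theta$ to every copy of $\theta'$ by a $\gamma$-edge (handling self-loops and the undirectedness convention along the way); making a node incident to all copies of each required neighbouring type realizes exactly the prescribed set of neighbour-types, hence, by (i), the prescribed received set of states in every round, while connectedness of $\bigcup_\gamma E_\gamma$ keeps the whole graph connected, and using enough copies also lets one ensure that every type of $\Theta$ genuinely occurs. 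Finally, the reduced problem is decided by enumerating all at most $2^{2^{k'}}$ candidate sets $\Theta$ and, for each, all tuples of symmetric relations $E_\gamma$, checking (i)--(iii) for every combination; this runs in time doubly exponential in $\card{\Gamma}\cdot\card{Q}\cdot\len(\A)$, which, after absorbing the hidden constants, is the claimed bound $2^{2^{k'}}$.
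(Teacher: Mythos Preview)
Your argument is correct and lands on the same complexity bounds, but the route differs from the paper's. The paper proves a \emph{small-model property}: starting from any accepted graph it uses pigeonhole to find two nodes with the same state sequence (respectively, the same state-and-received-sets sequence in the connected undirected case), deletes one and redirects its outgoing edges to the other (respectively, merges the two), and iterates until the graph has at most $\card{Q}^{\len(\A)+1}$ (respectively, $(\card{Q}\cdot 2^{\card{Γ}\cdot\card{Q}})^{\len(\A)+1}$) nodes; emptiness is then decided by brute-force enumeration of all graphs up to that size. You instead give an \emph{abstract characterization} of non-emptiness---a consistent set $S$ of traces, and for the second part a set $\Theta$ of types equipped with symmetric relations $E_\gamma$---and build a concrete witness graph from that data. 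Both approaches isolate the same invariant (only trace/type information matters, not multiplicities), and your ``one node per trace'' witness is essentially the fixed point of the paper's shrinking procedure. The paper's version is shorter, especially in the connected undirected case, where ``merge two nodes of the same type'' is a one-line step that automatically preserves connectedness and undirectedness, whereas your multi-copy complete-bipartite construction needs the case analysis you flag (self-relations in $E_\gamma$, the singleton-$\Theta$ case). In return, your formulation makes the search space explicit as sets of traces/types rather than raw small graphs, which is conceptually cleaner and in fact yields a slightly tighter exponent than the paper's stated bound for the first part.
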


\begin{proofsketch}
  Let $G_λ∈Σ^{\clouded{Γ}}$. Since NDGAs cannot perform universal
  branching, we can consider any run of $\A$ on $G_λ$ as a sequence of
  configurations $R=G_{κ_0}\!\cdots G_{κ_n}$, with $n≤\len(\A)$. In
  $R$, each node of $G$ traverses one of at most
  $\card{Q}^{\len(\A)+1}$ possible sequences of states. Now, assume
  that $G$ has more than $\card{Q}^{\len(\A)+1}$ nodes. Then, by the
  Pigeonhole Principle, there must be two distinct nodes $v,v'∈\VG$
  that traverse the same sequence of states in $R$. We construct a
  smaller graph $G'$ by removing $v'$ from $G$, together with its
  adjacent edges, and adding directed edges from $v$ to all of the
  former \emph{outgoing} neighbors of $v'$. If all the nodes in $G'$
  maintain their nondeterministic choices from $R$, none of them will
  notice that $v'$ is missing, and consequently they all behave just
  as in $R$. The resulting run $R'$ on $G'$ is accepting \Iff $R$ is
  accepting.

  Applying this argument recursively, we conclude that if $\L(\A)$ is
  not empty, then it must contain some labeled graph that has at most
  $\card{Q}^{\len(\A)+1}$ nodes. Hence, the emptiness problem is
  decidable because the search space is finite. The time complexity
  indicated above corresponds to the naive approach of checking every
  (directed) graph with at most $\card{Q}^{\len(\A)+1}$ nodes.

  If we are only interested in (connected) undirected graphs, the
  reasoning is very similar, but we have to require a larger minimum
  number of nodes in order to be able to remove some node without
  influencing the behavior of the others. In a graph $G$ with more
  than
  \mbox{$\bigl(\card{Q}·2^{\card{Γ}·\card{Q}}\bigr)^{\len(\A)+1}$}
  nodes, there must be two distinct nodes $v,v'∈\VG$ that, in addition
  to traversing the same sequence of states, also receive the same
  family of sets of states from their neighborhood in every
  round. Observe that the automaton will not notice if we merge $v$
  and $v'$. The rest of the argument is analogous to the previous
  scenario.
\end{proofsketch}

\subsection{Summary and Discussion}
We have introduced ADGAs, which are probably the first graph automata
in the literature to be equivalent to MSO-logic on graphs. However,
their expressive power results mainly from the use of alternation: we
have seen that the deterministic, nondeterministic and alternating
variants form a strict hierarchy, i.e.,
\begin{equation*}
  \LL_\DDGA \hyperref[lem:ddga<ndga]{⊂} \LL_\NDGA 
  \hyperref[lem:ndga<adga]{⊂} \LL_\ADGA.
\end{equation*} 
The corresponding closure and decidability properties are summarized
in \cref{tab:closure-decidability}.

\begin{table}[h!]
  \alignpic
  \begin{tabular}{lccccc}
    \toprule
         & \multicolumn{4}{c}{Closure Properties} & Decidability \\
    \cmidrule(rl){2-5} \cmidrule(rl){6-6}
         & Complement & Union & Intersection & Projection & Emptiness \\
    \addlinespace
    ADGA & \hyperref[lem:adga-closure]{\cmark} & \hyperref[lem:adga-closure]{\cmark}
         & \hyperref[lem:adga-closure]{\cmark} & \hyperref[lem:adga-closure]{\cmark} 
         & \hyperref[cor:adga-emptiness]{\xmark} \\
    \addlinespace
    NDGA & \hyperref[lem:ndga-closure]{\xmark} & \hyperref[lem:ndga-closure]{\cmark}
         & \hyperref[lem:ndga-closure]{\cmark} & \hyperref[lem:ndga-closure]{\cmark}
         & \hyperref[lem:ndga-emptiness]{\cmark} \\
    \addlinespace
    DDGA & \hyperref[lem:ddga-closure]{\cmark} & \hyperref[lem:ddga-closure]{\cmark}
         & \hyperref[lem:ddga-closure]{\cmark} & \hyperref[lem:ddga-closure]{\xmark}
         & \hyperref[lem:ndga-emptiness]{\cmark} \\
    \bottomrule
  \end{tabular}
  \caption{Closure and decidability properties of alternating,
    nondeterministic, and deterministic DGAs.}
  \label{tab:closure-decidability}
\end{table}

On an intuitive level, this hierarchy and these closure properties do
not seem very surprising. One might even ask: \emph{are ADGAs just
  another syntax for MSO-logic?} Indeed, universal branchings
correspond to universal quantification, and nondeterministic choices
to existential quantification. By disallowing universal set
quantification in MSO-logic we obtain EMSO-logic, and further
disallowing existential set quantification yields
FO-logic. Analogously to DGAs, the classes of graph languages
definable in these logics form a strict hierarchy, i.e.,
\begin{equation*}
  \LL_\FO ⊂ \LL_\EMSO ⊂ \LL_\MSO.
\end{equation*} 
Furthermore, the closure properties of $\LL_\EMSO$ and $\LL_\FO$
coincide with those of $\LL_\NDGA$ and $\LL_\DDGA$,
respectively. Given that $\LL_\ADGA$ and $\LL_\MSO$ are equal, one
might therefore expect that the analogous equalities hold for the
weaker classes. However, as already hinted by the positive
decidability properties in \cref{tab:closure-decidability}, this is
not the case.  The actual relationships between the different classes
of graph languages are depicted in \cref{fig:venn_diagram}. A glance
at this Venn diagram suggests that ADGAs are not simply a one-to-one
reproduction of MSO-logic.

\begin{figure}
  \alignpic
  \begin{tikzpicture}[semithick]
  \tikzstyle{classL} = [font=\large]
  \tikzstyle{exampleL} = [color=darkgray]

  \def\Ladga{(0,0) ellipse (33ex and 25ex)}
  \def\Lemso{(0,0) ellipse (27ex and 19ex)}
  \def\Lndga{(180:7ex) circle (14ex)}
  \def\Lfo{(0:7ex) circle (14ex)}
  \def\Lddga{(90:2ex) ellipse (5ex and 6ex)}

  \draw[darkblue] \Ladga node[above=20ex,classL] {\color{black}${\color{darkblue}\LL_\ADGA}={\color{darkred}\LL_\MSO}$}
      node[below=20ex,exampleL] {$•\,\sL{connected}$};
  \draw[darkred] \Lemso node[above=14.2ex,classL] {$\LL_\EMSO$};
  \draw[darkred] \Lfo node[xshift=3.5ex,yshift=9ex,classL] {$\LL_\FO$}
      node[xshift=3ex,yshift=-10ex,exampleL] {$•\,\dL[order]{≤k}$};
  \draw[darkblue] \Lndga node[xshift=-3.5ex,yshift=9ex,classL] {$\LL_\NDGA$}
      node[xshift=-3ex,yshift=-10ex,exampleL] {$•\,\dL[colorable]{k}$};
  \draw[darkblue] \Lddga node[above=.5ex,classL] {$\LL_\DDGA$}
      node[below=.5ex,exampleL] {$•\,\dL[colored]{k}$};
  \node[exampleL] at (-90:8ex) {$•\,\dL[order]{≥k}$};

  \begin{scope}[on background layer]
    \fill[llightgray!70!white] \Ladga;
    \fill[lightblue] \Lndga;
  \end{scope}
\end{tikzpicture}
  \caption{Venn diagram relating the classes of graph languages
    recognizable by our three flavors of DGAs to those definable in
    MSO-, EMSO- and FO-logic.}
  \label{fig:venn_diagram}
\end{figure}

\begin{proof}[Justification of \cref{fig:venn_diagram}]
  Fagin has shown in \cite{Fag75} that the language $\sL{connected}$
  of all (weakly) connected graphs separates $\LL_\EMSO$ from
  $\LL_\MSO$. (Since non-connectivity is EMSO-definable, this also
  implies that $\LL_\EMSO$ is not closed under complementation.) The
  inclusion $\LL_\NDGA⊆\LL_\EMSO$ holds because we can encode every
  NDGA into an EMSO-sentence, using the same construction as in the
  proof sketch of \cref{thm:adga=mso}. It is also easy to see that we
  do not need any set quantifiers to encode DDGAs, hence
  $\LL_\DDGA⊆\LL_\FO$. In the following, let $k,k'≥2$. The
  incomparability of $\LL_\NDGA$ and $\LL_\FO$ is witnessed by the
  language $\dL[colorable]{k}$ of $k$-colorable graphs, which lies
  within $\LL_\NDGA$ (see Example~\ref{ex:ADGA_3_colorable}) but
  outside of $\LL_\FO$ (see, e.g., \cite{Lib04}), and the language
  $\dL[order]{≤k}$ of graphs with at most $k$ nodes, which lies
  outside of $\LL_\NDGA$ (see the proof of \cref{lem:ndga<adga}) but
  obviously within $\LL_\FO$. Considering the union language
  $\dL[colorable]{k}∪\dL[order]{≤k'}$ also tells us that the inclusion
  of $\LL_\NDGA∪\LL_\FO$ in $\LL_\EMSO$ is strict. Finally, the
  language $\dL[order]{≥k}$ of graphs with at least $k$ nodes
  separates $\LL_\DDGA$ from $\LL_\NDGA∩\LL_\FO$ (see the proof of
  \cref{lem:ddga<ndga}). A simple example of a language that lies
  within $\LL_\DDGA$ is the set $\dL[colored]{k}$ of $Σ$-labeled
  graphs whose labelings are valid $k$-colorings, with $\card{Σ}=k$.
\end{proof}

As of the time of writing this paper, no new results on $\LL_\MSO$
have been inferred from the alternative characterization through
ADGAs. On the other hand, the notion of NDGA contributes to the
general observation, mentioned in \cref{sec:introduction}, that many
characterizations of regularity, which are equivalent on words and
trees, drift apart on graphs. To see this, consider NDGAs whose input
is restricted to those $Σ$-labeled $Γ$-graphs that represent words or
trees over the alphabet $Σ$. For words, $Γ=\{\blank\}$ and edges
simply go from one position to the next, whereas for ordered trees of
arity $k$, we set $Γ=\{1,…,k\}$ and require edge relations such that
\noheight{$u\xarr{i}v$} \Iff $u$ is the $i$-th child of $v$. Observe
that we can easily simulate any word or tree automaton by an NDGA of
length $2$: guess a run of the automaton in the first round (each node
nondeterministically chooses some state), then check whether it is a
valid accepting run in the second round (transitions are verified
locally, and acceptance is determined by the unique sink). This
implies that the classes of NDGA-recognizable and MSO-definable
languages collapse on words and trees, and hence that NDGAs recognize
precisely the regular languages on those restricted structures.

The fact that the emptiness problem of NDGAs is decidable on graphs
seems noteworthy for several reasons:
\begin{itemize}[topsep=1ex,itemsep=0ex]
\item It can be seen as an extension to graphs of the corresponding
  decidability results for finite automata on words and trees, since,
  by the above remark, the emptiness problems of these automata
  correspond precisely to those of NDGAs restricted to words and
  trees, respectively.
\item It might lead to the discovery of new decidable logics on
  graphs: a logic effectively equivalent to NDGAs would have a
  decidable satisfiability problem, and a logic effectively equivalent
  to DDGAs would additionally have a decidable validity problem. This
  could be interesting when contrasted with Trakhtenbrot's Theorem,
  which states that these problems are undecidable for FO-logic, and a
  fortiori for (E)MSO-logic (see, e.g., \cite[Thm~9.2]{Lib04}).
\item It implies that the language inclusion problem of DDGAs is also
  decidable: given two DDGAs $\A_1$ and $\A_2$, we can decide whether
  $\L(\A_1)⊆\L(\A_2)$ by first applying the intersection construction
  on $\A_1$ and a complement of $\A_2$, and then deciding emptiness
  for the resulting automaton. (This does not extend to NDGAs, since
  they do not satisfy closure under complementation.) The verification
  method presented in the next section is based on such an inclusion
  test.
\end{itemize}

\section{Verification of Distributed Algorithms} \label{sec:DA_verification}
The notion of graph automaton might have an application in formal
verification of synchronous distributed algorithms. In this section,
we consider a very simple toy example of such an algorithm, and
suggest a mechanical verification technique based on DDGAs for proving
partial correctness, using Floyd-Hoare logic. So far, our approach
only works for an extremely restricted class of synchronous
algorithms. However, since the method does not intrinsically depend
upon a particular automaton model, it is possible, in principle, to
extend it by replacing DDGAs with a more powerful class of graph
automata. In this regard, the following method should be considered as
an illustration of a concept, rather than a “ready-to-use” solution.

\subsection{Distributed Programming Language}
As mentioned by Konnov et al.\ in \cite{KVW12}, one of the major
obstacles in formal verification of distributed algorithms is the lack
of a versatile formal language to specify such algorithms. They refer
to it as the \emph{formalization problem}. Indeed, most of the
distributed algorithms found in the literature are given as
pseudocode, since implementation details are generally not the main
concern.

Here, we restrict ourselves to a very weak class of synchronous
algorithms for which the formalization problem can be easily solved.
(This is not, by any means, an attempt at a general solution.) We
design our programming language in such a way that individual
synchronous rounds can be simulated by a DDGA. In particular, this
means that we only consider algorithms where
\begin{itemize}[topsep=1ex,itemsep=0ex]
\item the nodes have a finite state space,
\item they send the same message to all of their neighbors, and
\item they only receive a set containing all the messages sent by
  their neighbors.
\end{itemize}
Furthermore, in contrast to classical distributed algorithms, we
express loops from the global point of view of a controller that can
see the states of all the nodes at once. This will allow us to partly
reason about distributed algorithms as if they were ordinary
sequential algorithms and employ the inference rules from Floyd-Hoare
logic. Obviously, the presence of a global controller introduces some
additional expressive power which is not available in a purely
distributed setting. We shall make use of it to model supplementary
knowledge that the nodes might have about the graph. As a matter of
fact, it is often assumed in distributed computing that the nodes know
properties such as the order of the graph or its diameter.

We shall assume that our algorithms always run on \emph{connected},
\emph{undirected} graphs. The former property is usually required in
distributed computing because nodes in separate connected components
are unable to communicate with each other, which de facto means that
any distributed algorithm is executed separately in each connected
component. Assuming that graphs are undirected is also very common,
and generally leads to simpler algorithms. In order to restrict the
possible input graphs of our automata accordingly, for every DDGA
$\A$, we denote by \defd{$\L_\UnCo(\A)$} the set of connected,
undirected labeled graphs that are accepted by $\A$. For the remainder
of this section, since no confusion with $\L(\A)$ can arise, we will
also refer to $\L_\UnCo(\A)$ as the graph language \defd{recognized}
by $\A$, and say that it is \defd{DDGA-recognizable}.

We now semi-formally specify the syntax and semantics of our
distributed programming language. Any considered distributed algorithm
operates on a set of \defd{variables} $\defd{\Var} = \{x_1,…,x_k\}$
ranging over \defd{values} from some \emph{finite} domain
$\defd{\Val}$. Each node $v$ of the input graph has its own private
copies of these variables, which are denoted by $v.x_1,…,v.x_k$ and
are referred to as $v$'s \defd{member variables}. The \defd{global
  state} of a graph is given by a valuation of the member variables of
all of its nodes. Formally, any $(\Val^\Var)$-labeled graph $G_λ$ is a
global state of the graph $G$, i.e., we label the nodes of $G$ with
functions from $\Var$ to $\Val$.

The commands executed locally by a node $v$ can contain expressions
evaluated over $\Val$. The syntax of these expressions is given by
\begin{equation*}
  \NT{e} \Coloneqq v.x \bigm| f(\NT{e},…,\NT{e}) \bigm| f(M,\NT{e},…,\NT{e}),
\end{equation*}
where $x∈\Var$, $M$ is a special set variable that does not contribute
to the global state, and $f$ represents some function from
$(\Val×\cdots×\Val)$ or $(2^\Val×\Val×\cdots×\Val)$ into $\Val$.

Similarly, we allow Boolean expressions of the form
\begin{equation*}
  \NT{b} \Coloneqq f(\NT{e},…,\NT{e}) \bigm| f(M,\NT{e},…,\NT{e}),
\end{equation*}
where the function associated with $f$ maps into the Boolean domain.

As elementary \defd{local commands}, any node $v$ can either do
nothing (\Skip) or assign a new value to one of its member
variables. Furthermore, local commands can be composed sequentially
and executed conditionally. The corresponding syntax is given by
\begin{equation*}
  \NT{C} \Coloneqq \text{\Skip;}
  \bigm| \text{\Assign{$v.x$}{$\NT{e}$};}
  \bigm| \text{$\NT{C}$ $\NT{C}$}
  \bigm| \text{\xIf $\NT{b}$ \xThen $\NT{C}$ \xElse $\NT{C}$ \xEnd}.
\end{equation*}

A \defd{local command block} executed by $v$ consists of a sequence of
local commands. It can optionally be preceded by a synchronous message
exchange, where $v$ sends the value of one of its member variables
$v.x$ to all of its neighbors, and in return receives a set of values
which is assigned to the dedicated set variable $M$. The only purpose
of $M$ is to access the set of incoming messages, and its scope is
restricted to the local command block.
\begin{equation*}
  \NT{D} \Coloneqq \text{$\NT{C}$}
  \bigm| \text{\SendReceive{$v.x$}{$M$};\, $\NT{C}$}
\end{equation*}

Next, we switch to a global perspective where we can control which
local command blocks are executed by the nodes. As an elementary
\defd{global command}, we can tell all the nodes to execute a
particular local command block synchronously in parallel. This
corresponds to a single synchronous round of a distributed
algorithm. To express more complex algorithms, global commands can be
composed sequentially and executed in loops. The syntax is of the form
\begin{equation*}
  \NT{K} \Coloneqq \text{\xEach $v$ \xDoes $\NT{D}$ \xEnd}
  \bigm| \text{$\NT{K}$ $\NT{K}$}
  \bigm| \text{\xWhile $ξ$ \xDo $\NT{K}$ \xEnd},
\end{equation*}
where $ξ$ is a textual representation of a condition on the global
state of the graph that can be expressed as a DDGA-recognizable graph
language. We compactly represent such conditions by FO-formulas over
the node alphabet $\Val^\Var$\!, and refer to them as
\defd{DDGA-recognizable assertions}. Since DDGA-recognizable graph
languages are closed under Boolean set operations, there are no
restrictions on combining these formulas using the usual propositional
connectives. Also, for convenience, we shall use many syntactic
abbreviations whose meaning should be clear. For instance, by the
abbreviation \noheight{$\logic{\meta{\lsymb{v}.x_i=c}}$} we mean the
disjunction of all the formulas \noheight{$\logic{\lab{\meta{a}}v}$}
such that $a∈\Val^\Var$ and $a(x_i)=c$.

\bigskip 

Let us consider the \textsc{FloodMax} algorithm as an example of a
simple distributed algorithm that can be expressed in the programming
language we just defined.
\begin{example}[{\sc FloodMax} Algorithm] \label{ex:floodmax}
  Initially, each node is given a number from some finite domain. The
  task for the nodes is to compute the maximum number $m$ present in
  the graph in such a way that, once the algorithm has terminated, all
  of them know $m$. An algorithm solving this problem can be used, for
  instance, to solve the leader election problem (see, e.g.,
  \cite[Sec~4.1]{Lyn96}).

  If we assume that the nodes know the diameter $d$ of the graph, a
  simple approach is as follows: In each synchronous round, each node
  sends the maximum number it has seen so far (initially its own) to all
  of its neighbors. After $d$ rounds, we are guaranteed that every node
  has received the global maximum.

  In order to formalize this algorithm in our framework, we must somehow
  exploit the power of the global controller to simulate the
  circumstance that the nodes know $d$. However, since the controller
  can only check DDGA-recognizable assertions, it has no way of knowing
  $d$ itself. Fortunately, this is not necessary. Counting the number of
  rounds up to $d$ is only a way of ensuring that enough time has passed
  for information to propagate between any two nodes. Alternatively, the
  algorithm could also terminate as soon as no node receives any new
  information. Although this condition cannot be checked in a purely
  distributed setting, it can be specified by a DDGA, and thus leads to
  a variant of the algorithm that we can formalize.

  A possible way of formalizing \textsc{FloodMax} can be seen in
  Algorithm~\ref{alg:floodmax}. This algorithm operates on the
  variables $m$ and $m_\old$, which take values from some finite set
  $I$ of nonnegative integers that contains $0$. Initially, for each
  node $v$, a number in $I$ is assigned to $v.m$ as input. Then, in
  each round, $v$ updates its member variables in such a way that
  $v.m$ holds the maximum value it has seen so far and $v.m_\old$
  holds the value of $v.m$ from the previous round (except for the
  first round, where it is set to $0$). The algorithm terminates as
  soon as for every node $v$ the member variables $v.m$ and $v.m_\old$
  have the same value, a property (here represented by the assertion
  \noheight{$\logic{∃v(\meta{\lsymb{v}.m ≠ \lsymb{v}.m_\old})}$}) that
  can be easily checked by a DDGA over the node alphabet
  $I^{\{m,\,m_\old\}}$.
\end{example}

\begin{algorithm}
  \caption{\textsc{FloodMax}}
  \label{alg:floodmax}
  \Each{$v$}{
    \Assign{$v.m_\old$}{$0$}\;
  }
  \While{$\logic{∃v(\meta{\lsymb{v}.m ≠ \lsymb{v}.m_\old})}$}{
    \Each{$v$}{
      \SendReceive{$v.m$}{$M$}\;
      \Assign{$v.m_\old$}{$v.m$}\;
      \Assign{$v.m_{\phantom{\old}}$}{$\max(M ∪ \{v.m\})$}\;
    }
  }
\end{algorithm}

\subsection{Verification Method}
Now that we have a formal language for representing certain
distributed algorithms, we can turn towards the verification method
mentioned earlier. The basic idea is to consider a synchronous
distributed algorithm as an ordinary sequential one, and treat each
round of that algorithm as an atomic operation on the global state of
the graph. Consequently, once we know how to derive a Hoare triple for
a single round, we can simply use classical Floyd-Hoare logic to prove
partial correctness of an entire algorithm.

In our framework, a synchronous round is represented by a global
command $K \,=\, \text{\xEach $v$ \xDoes $D$ \xEnd}$, where $D$ is
some local command block. We desire an inference rule that allows us
to derive a Hoare triple of the form $\hoare{φ}\,K\,\hoare{ψ}$, where
$φ$ and $ψ$ are DDGA-recognizable assertions on the global state of
the graph.

As hinted previously, we can now take advantage of the restrictions
that we have put on the considered algorithms in order to simulate $K$
by a DDGA. This allows us to construct an automaton for the
\defd{weakest precondition} under $K$ of any DDGA-recognizable
assertion. Given some DDGA $\A$ over the node alphabet $\Val^\Var$\!,
we define a DDGA \defd{$\Wp(K,\A)$} over the same alphabet such that
$\Wp(K,\A)$ first simulates the execution of $K$ on the input graph
and then operates like $\A$ on the resulting graph. In other words,
$\Wp(K,\A)$ is an automaton that accepts a $(\!\Val^\Var)$-labeled
graph $G_λ$ \Iff $\A$ accepts the labeled graph $G_{λ'}$ that one
obtains after running the global command $K$ on $G_λ$. Such an
automaton can be effectively constructed because $\Var$ and $\Val$ are
finite and the message exchange process in our algorithm framework is
the same as for DDGAs. We can easily obtain $\Wp(K,\A)$ by inserting
one additional level of states before the first level of $\A$.

This brings us to the desired inference rule for $K \,=\, \text{\xEach
  $v$ \xDoes $D$ \xEnd}$:
\begin{equation*}
  \tag{single round}
  \inferrule{\L_\UnCo(\A_φ)⊆\L_\UnCo\bigl(\Wp(K,\A_ψ)\bigr)}{\hoare{φ}\,K\,\hoare{ψ}},
\end{equation*}
where $\A_φ$ and $\A_ψ$ designate DDGAs that recognize the properties
specified by $φ$ and $ψ$, respectively. This rule reduces the problem
of deriving a Hoare triple for a single round to the language
inclusion problem of DDGAs. We know that the latter is decidable,
because it can, in turn, be reduced to the emptiness problem of DDGAs
on connected, undirected graphs, which we have shown to be decidable
in \cref{lem:ndga-emptiness}. (This second reduction also relies on
the fact that DDGA-recognizable graph languages are effectively closed
under complementation and intersection, which holds by
\cref{lem:ddga-closure}.)

For more complex global commands, we simply take over the inference
rules from classical Floyd-Hoare logic, i.e., for any global commands
$K$, $K_1$, $K_2$ and DDGA-recognizable assertions $φ$, $φ'$, $ψ$,
$ψ'$, $θ$, $ξ$, we have
\begin{equation*}
  \tag{sequence}
  \inferrule{\hoare{φ}\:K_1\:\hoare{θ}\\\hoare{θ}\:K_2\:\hoare{ψ}}{\hoare{φ}\:\text{$K_1$ $K_2$}\:\hoare{ψ}},
\end{equation*}
\begin{equation*}
  \tag{loop}
  \inferrule{\hoare{\logic{\meta{θ}∧\meta{ξ}}}\:K\:\hoare{θ}}{\hoare{θ}\:\text{\xWhile $ξ$ \xDo $K$ \xEnd}\:\hoare{\logic{\meta{θ}∧¬\:\!\meta{ξ}}}},
  \quad\text{and}
\end{equation*}
\vspace{.5ex}
\begin{equation*}
  \tag{strengthen/weaken}
  \inferrule{\L_\UnCo(\A_φ)⊆\L_\UnCo(\A_{φ'})\\\hoare{φ'}\,K\,\hoare{ψ'}\\\L_\UnCo(\A_{ψ'})⊆\L_\UnCo(\A_ψ)}{\hoare{φ}\,K\,\hoare{ψ}}.
\end{equation*}

\bigskip 

We can now use this method to verify the \textsc{FloodMax} algorithm
from \cref{ex:floodmax}.
\begin{example}[Verification of {\sc FloodMax}]
  An assertion-annotated version of the code is displayed in
  Algorithm~\ref{alg:floodmax_annot}. As is often the case in
  Floyd-Hoare proofs, verification is performed with the help of
  auxiliary variables, which may not be modified by the algorithm to
  be verified. We introduce an additional member variable $v.m_\ini$
  for each node $v$ in order to be able to refer to the initial value
  of $v.m$, and, accordingly, our precondition is
  $\logic{∀v(\meta{\lsymb{v}.m=\lsymb{v}.m_\ini})}$. The algorithm
  satisfies partial correctness if, after termination, every node $v$
  knows the maximum initial value present in the graph, i.e., the
  postcondition is 
  $\logic{ ∀v(\meta{\displaystyle\lsymb{v}.m=\max_{\lsymb{u}}\,
      \lsymb{u}.m_\ini})}$.

  The crucial part of the proof is finding a suitable loop invariant.
  Assertion~$θ$ in Algorithm~\ref{alg:floodmax_annot} turns out to be
  adequate. It states that for each node $v$, the currently largest
  known number $v.m$ is bounded from below by the largest values known
  to its neighbors in the previous round and by its own initial value
  $v.m_\ini$, and that there is some node in the graph that still
  retains its initial value. If we denote by $K$ the loop body
  (lines~\ref{lin:loop_body_start} to \ref{lin:loop_body_end}) and by
  $\A_θ$ a DDGA equivalent to $θ$, it is easy to see that
  $\L_\UnCo(\A_θ)⊆\L_\UnCo\bigl(\Wp(K,\A_θ)\bigr)$, and consequently
  we can use the “single round” rule to derive the Hoare triple
  $\hoare{θ}\,K\,\hoare{θ}$. Note that our invariant does not even
  rely on the exit condition $ξ$ of the loop. By the
  “strengthen/weaken” rule, we obtain
  \noheight{$\hoare{\logic{\meta{θ}∧\meta{ξ}}}\,K\,\hoare{θ}$}, and
  then the “loop” rule allows us to derive
  \noheight{$\hoare{θ}\:\text{\xWhile $ξ$ \xDo $K$
      \xEnd}\:\hoare{\logic{\meta{θ}∧¬\:\!\meta{ξ}}}$}. It is also
  easy to see that the loop invariant $θ$ follows from assertion~$φ$,
  and that \noheight{$\logic{\meta{θ}∧¬\:\!\meta{ξ}}$} implies
  assertion~$ψ$. Again, this can be expressed in terms of inclusions
  of DDGA-recognizable graph languages, and by the “strengthen/weaken”
  rule we get $\hoare{φ}\:\text{\xWhile $ξ$ \xDo $K$
    \xEnd}\:\hoare{ψ}$.

  By proceeding analogously for the initialization part
  (lines~\ref{lin:init_start} to \ref{lin:init_end}), and then applying
  the “sequence” and “strengthen/weaken” rules, we can formally derive
  the desired Hoare triple
  \begin{equation*}
    \hoare{\logic{∀v(\meta{\lsymb{v}.m=\lsymb{v}.m_\ini})}}
    \:\text{\textsc{FloodMax}}\:
    \hoare{\logic{∀v(\meta{\displaystyle\lsymb{v}.m=\max_{\lsymb{u}}\, \lsymb{u}.m_\ini})}}.
  \end{equation*}
  Note that the postcondition follows from assertion~$ψ$ because we
  only consider graphs that are undirected and connected. The first
  conjunct of $ψ$ implies that all the nodes $v$ have the same value
  for $v.m$, while the two remaining conjuncts ensure that this value
  is indeed the maximum initial value present in the graph.
\end{example}

\begin{algorithm}
  \caption{\textsc{FloodMax} with Floyd-Hoare Annotations}
  \label{alg:floodmax_annot}
  \Hoare{$\logic{∀v(\meta{\lsymb{v}.m=\lsymb{v}.m_\ini})}$}
  \Each{$v$}{ \label{lin:init_start}
    \Assign{$v.m_\old$}{$0$}\;
  } \label{lin:init_end}
  \Hoare[$φ$]{$\logic{∀v(\meta{\lsymb{v}.m=\lsymb{v}.m_\ini} \,∧\, \meta{\lsymb{v}.m_\old=\zero})}$}
  \While{\Tagged{$ξ$}{$\logic{∃v(\meta{\lsymb{v}.m ≠ \lsymb{v}.m_\old})}$}}{
    \Hoare[$θ$]{$\logic{∀u,v(u\!\arr\!v \,⇒\, \meta{\lsymb{u}.m_\old≤\lsymb{v}.m}) \,∧\, ∀v(\meta{\lsymb{v}.m≥\lsymb{v}.m_\ini}) \,∧\, ∃v(\meta{\lsymb{v}.m=\lsymb{v}.m_\ini})}$}
    \Each{$v$}{ \label{lin:loop_body_start}
      \SendReceive{$v.m$}{$M$}\;
      \Assign{$v.m_\old$}{$v.m$}\;
      \Assign{$v.m_{\phantom{\old}}$}{$\max(M ∪ \{v.m\})$}\;
    } \label{lin:loop_body_end}
  }
  \Hoare[$ψ$]{$\logic{∀u,v(u\!\arr\!v \,⇒\, \meta{\lsymb{u}.m≤\lsymb{v}.m}) \,∧\, ∀v(\meta{\lsymb{v}.m≥\lsymb{v}.m_\ini}) \,∧\, ∃v(\meta{\lsymb{v}.m=\lsymb{v}.m_\ini})}$}
  \Hoare{$\logic{∀v(\meta{\displaystyle\lsymb{v}.m=\max_{\lsymb{u}}\, \lsymb{u}.m_\ini})}$}
\end{algorithm}

\subsection{Prospects and Limitations}
As mentioned at the beginning of this section, our adaptation of
Floyd-Hoare logic is in principle not restricted to the toy language
presented here. By replacing DDGAs with a more powerful class of (not
necessarily finite-state) graph automata, we might directly obtain a
variant of the framework in which we could formalize and verify more
interesting distributed algorithms.

In order to be suitable for our purposes, an automaton model must
\begin{itemize}[topsep=1ex,itemsep=0ex]
\item be effectively closed under Boolean set operations,
\item have a decidable emptiness problem, and
\item be able to simulate a single synchronous round of any algorithm
  that can be specified in the corresponding formal language.
\end{itemize}
Hence, NDGAs and ADGAs cannot be used to extend this method (the
former not being closed under complementation, the latter having an
undecidable emptiness problem).

Besides covering a larger class of algorithms, a more expressive
automaton model might also allow us to specify to-be-verified
algorithms in a more natural way, less dependent on the global
controller. With DDGAs, the controller has to compensate for the fact
that we cannot, in general, provide DDGA-recognizable assertions on
the nodes' knowledge about properties of the graph (such as the
diameter in the \textsc{FloodMax} algorithm). If, on the other hand,
we were able to express such assertions, the role of the controller
could be reduced to simply checking whether all the nodes have
terminated. Note that this would inevitably require an automaton model
over infinite node alphabets, since the nodes would have to store
information of unbounded size.

Our verification method also has a limitation that cannot be overcome
by simply switching to another automaton model: it is only applicable
to \emph{synchronous} distributed algorithms. However, this might not
be an issue in practice because any synchronous algorithm can be
(automatically) converted into an asynchronous one using a
synchronizer, as suggested by Awerbuch in \cite{Awe85}. Thus, assuming
the tool used for conversion is correct, a mechanical verification
technique for synchronous algorithms also provides an indirect way of
obtaining verified asynchronous algorithms. Since it is usually easier
to design algorithms for synchronous systems, this seems like a
practical approach.

\subsection{Related Work}
The method presented here is based on the simple observation that we
can reason about a synchronous distributed algorithm as if it were a
sequential algorithm whose elementary operations modify the global
state of an entire graph. This approach has also been recently
employed by Drăgoi et al.\ in \cite{DHV14}, where they have considered
fault-tolerant consensus algorithms operating in a synchronous setting
that allows the topology of the communication graph to change
nondeterministically in every round. (The \textsc{FloodMax} algorithm
from \cref{ex:floodmax} is a simple consensus algorithm.) In order to
verify such algorithms, they have introduced a many-sorted,
first-order-like logic with a very restricted syntax, in which they
can reason about the global state of a graph and its underlying
topology, as well as encode transitions between global states. For
many consensus algorithms, that logic permits to formalize statements
of the following form:
\begin{quote}
  “If state $G_λ$ satisfies invariant $\mathit{Inv}$ and some
  condition on the topology of the graph, and additionally the
  algorithm permits a transition from $G_λ$ to $G'_{λ'}$, then the
  invariant $\mathit{Inv}$ also holds in $G'_{λ'}$.”
\end{quote}
Here, $G_λ$ and $G'_{λ'}$ have the same nodes, but possibly different
edges. In this manner, Drăgoi et al.\ were able to formalize safety
properties and termination of many consensus algorithms from the
literature. The second part of their paper provides a semi-decision
procedure for checking the validity of such verification conditions
expressed in their logic. Furthermore, they could also identify a
decidable fragment of that logic, which, for some algorithms, is
already sufficient to prove correctness.

\section*{Acknowledgments}
The author would like to thank Fabian Kuhn and Andreas Podelski (both
from the University of Freiburg) for many comments and stimulating
discussions. Their combined expertise was especially helpful for the
part about verification of distributed algorithms.

\end{document}